\documentclass[journal,twocolumn,10pt]{IEEEtran}

\usepackage{amsmath,epsfig}
\usepackage{amssymb}
\usepackage{booktabs}
\usepackage{amsthm}
\usepackage{graphicx}
\usepackage{caption}
\usepackage{subcaption}
\usepackage{rotating}
\usepackage{floatflt} 
\usepackage{paralist} 
\usepackage{algorithm} 
\usepackage{xcolor}
\usepackage{color}
\usepackage{epstopdf}
\usepackage[normalem]{ulem}
\usepackage{float}
\usepackage{flexisym}
\usepackage{cite}
\usepackage{colortbl} 

\newcommand{\mypar}[1]{{\bf #1.}}
\newtheorem{myLem}{Lemma}

\newtheorem{myThm}{Theorem}
\newcommand{\R}{\ensuremath{\mathbb{R}}}
\newcommand{\C}{\ensuremath{\mathbb{C}}}
\DeclareMathOperator{\Id}{I}

\def\a{\mathbf{a}}
\def\aw{\widehat{\a}}

\def\c{\mathbf{c}}

\def\x{\mathbf{x}}
\def\xw{\widehat{\x}}

\def\u{\mathbf{u}}
\def\vw{\mathbf{v}}
\def\e{\mathbf{e}}
\def\ew{\widehat{\e}}
\def\t{\mathbf{t}}
\def\w{\mathbf{w}}
\def\ww{\widehat{\w}}
\def\f{\mathbf{f}}

\def\q{\mathbf{q}}

\def\Kset{\mathcal{K}}

\def\V{\mathcal{V}}
\def\M{\mathcal{M}}
\def\U{\mathcal{U}}
\def\I{\mathcal{I}}
\def\D{\mathcal{D}}
\def\Xw{\widehat{\X}}
\def\Ww{\widehat{\W}}
\def\Ew{\widehat{\E}}
\def\Cw{\widehat{\Co}}
\def\Zw{\widehat{\Z}}

\DeclareMathOperator{\Eig}{\Lambda}
\DeclareMathOperator{\Shrink}{\Theta}

\def\L{\mathcal{L}}
\def\ImP{\boldsymbol{\delta}}
\def\Eset{\mathcal{E}}

\DeclareMathOperator{\ZeroMatrix}{\mathbf{0}}
\DeclareMathOperator{\TV}{TV}
\DeclareMathOperator{\STV}{S}
\DeclareMathOperator{\Adj}{A}
\DeclareMathOperator{\W}{W}
\DeclareMathOperator{\Co}{C}
\DeclareMathOperator{\E}{E}

\DeclareMathOperator{\K}{K}
\DeclareMathOperator{\Pj}{P}
\DeclareMathOperator{\Q}{Q}

\DeclareMathOperator{\Um}{U}
\DeclareMathOperator{\Vm}{V}
\DeclareMathOperator{\X}{X}
\DeclareMathOperator{\Y}{Y}
\DeclareMathOperator{\Z}{Z}

\DeclareMathOperator{\T}{T}
\DeclareMathOperator{\prox}{prox}
\DeclareMathOperator{\proj}{proj}

\newcommand{\DSPG}{$\mbox{DSP}_{\mbox{\scriptsize G}}$}


\addtolength{\topmargin}{-2mm}
\addtolength{\textheight}{2mm}
\addtolength{\oddsidemargin}{-1mm}
\addtolength{\evensidemargin}{-1mm}
\addtolength{\textwidth}{2mm}

\begin{document}

\title{ Signal Recovery on Graphs: Variation Minimization}
\author{Siheng~Chen,
  Aliaksei~Sandryhaila,
  Jos\'e~M.~F.~Moura,
  Jelena~Kova\v{c}evi\'c
  \thanks{     
  S. Chen, J. M. F. Moura and J. Kova\v{c}evi\'c are with the Department of Electrical and Computer
    Engineering. J. M. F. Moura and J. Kova\v{c}evi\'c, are by courtesy with the Department of Biomedical Engineering, Carnegie
    Mellon University, Pittsburgh, PA, 15213 USA. Emails:
    \{sihengc,moura,jelenak\}@andrew.cmu.edu. A. Sandryhaila was at Carnegie
    Mellon University and now is with
    HP Vertica, Pittsburgh, PA, 15203 USA. Email: aliaksei.sandryhaila@hp.com.}
}
 \maketitle

\begin{abstract}
  We consider the problem of signal recovery on graphs.  Graphs model
  data with complex structure as signals on a graph. Graph signal
  recovery  recovers one or multiple smooth graph signals
  from noisy, corrupted, or incomplete measurements. We formulate graph signal recovery as an optimization problem, for which we provide a general solution through the alternating direction methods of multipliers. We show how
  signal inpainting, matrix completion, robust principal component
  analysis, and anomaly detection all relate to graph signal recovery and provide corresponding specific solutions and theoretical analysis. We validate the proposed methods on real-world
  recovery problems, including online blog classification, bridge
  condition identification, temperature estimation, recommender
  system for jokes, and expert opinion combination of online blog
  classification.
\end{abstract}
\begin{keywords}
  signal processing on graphs, signal recovery, matrix completion, semi-supervised learning
\end{keywords}


\section{Introduction}
\label{sec:intro}
With the explosive growth of information and communication, signals
are being generated at an unprecedented rate from various sources,
including social networks, citation, biological, and physical
infrastructures~\cite{Jackson:08,Newman:10}. Unlike time-series or
images, these signals have complex, irregular structure, which
requires novel processing techniques leading to the emerging field
of~\emph{signal processing on graphs}.

Signal processing on graphs extends classical discrete signal
processing for time-series to signals with an underlying complex,
irregular
structure~\cite{SandryhailaM:13,SandryhailaM:131,ShumanNFOV:13,HammondVG:11}. The
framework models that structure by a graph and
signals by graph signals, generalizing concepts and tools from
classical discrete signal processing to graph signal processing.
Recent work involves graph-based
filtering~\cite{SandryhailaM:13, SandryhailaM:131, NarangO:2012,NarangO:2013},
graph-based
transforms~\cite{SandryhailaM:13,HammondVG:11,NarangSO:10}, sampling
and interpolation on graphs~\cite{Pesenson:08,NarangGO:13, WangLG:14}, uncertainty principle on graphs~\cite{AgaskarL2013},
semi-supervised classification on
graphs~\cite{ChenCRBGK:13,SandryhailaM:13g,EkambaramFAB:13}, graph
dictionary learning~\cite{ZhangDF:12,ThanouSF:13}, and community
detection on graphs~\cite{ChenO:14}; for a recent review,
see~\cite{SandryhailaM:14}.

Two basic approaches to signal processing on graphs have been
considered, both of which analyze signals with complex, irregular
structure, generalizing a series of concepts and tools from classical
signal processing, such as graph filters, or graph Fourier transform, to
diverse graph-based applications, such as graph signal denoising,
compression, classification, and
clustering~\cite{ShumanNFOV:13,ElmoatazLB:08,ChenSMK:13,DongFVN:14}. The
first is rooted in~\emph{spectral graph theory}~\cite{Chung:96,BelkinN:03} and builds on the graph
Laplacian matrix. Since the graph Laplacian matrix is restricted to be
symmetric and positive semi-definite, this approach is applicable only
to undirected graphs with real and nonnegative edge weights. The
second approach,~\emph{discrete signal processing on graphs}
(\DSPG)~\cite{SandryhailaM:13,SandryhailaM:131}, is rooted in the
\emph{algebraic signal processing
  theory}~\cite{PueschelM:08,Pueschelm:08b} and builds on the graph
shift operator, which works as the elementary filter that generates
all linear shift-invariant filters for signals with a given
structure. The graph shift operator is the adjacency matrix and
represents the relational dependencies between each pair of
nodes. Since the graph shift operator is not restricted to be symmetric, this approach is
applicable to arbitrary graphs, those with undirected or directed
edges, with real or complex, nonnegative or negative weights.

In this paper, we consider the classical signal processing task of
signal recovery within the framework of \DSPG. Signal recovery
problems in the current literature include image
denoising~\cite{Mallat:09,BuadesCM:05}, signal
inpainting~\cite{Rudin:92,Chan:01,ChambolleA:04}, and
sensing~\cite{Donoho:06,CandesRT:06a}, but are limited to signals with
regular structure, such as time series. We use \DSPG~to deal with
signals with arbitrary structure, including both undirected and
directed graphs. Graph signal recovery attempts to recover one or
multiple graph signals that are assumed to be smooth with respect to
underlying graphs, from noisy, missing, or corrupted measurements. The
smoothness constraint assumes that the signal samples at neighboring
vertices are similar~\cite{SandryhailaM:131}.

We propose a graph signal model, cast graph signal recovery as an
optimization problem, and provide a general solution by using the
alternating direction method of multipliers. We show that many
classical recovery problems, such as signal
inpainting~\cite{Rudin:92,Chan:01,ChambolleA:04}, matrix
completion~\cite{CandesR:09,CandesP:10}, and robust principal
component analysis~\cite{CandesLMW:11,WrightGRM:09}, are related to
the graph signal recovery problem.  We propose theoretical solutions
and new algorithms for graph signal inpainting, graph signal matrix
completion, and anomaly detection of graph signals, all applicable to
semi-supervised classification, regression, and matrix
completion. Finally, we validate the proposed methods on real-world
recovery problems, including online blog classification, bridge
condition identification, temperature estimation, recommender system,
and expert opinion combination.

\mypar{Previous work} We now briefly review existing work related to recovery problems.~\emph{Image denoising} recovers an image from noisy
observations. Standard techniques include Gaussian smoothing, Wiener
local empirical filtering, and wavelet thresholding methods
(see~\cite{BuadesCM:05} and references therein).  \emph{Signal
  inpainting} reconstructs lost or deteriorated parts of signals,
including images and videos. Standard techniques include total
variation-based
methods~\cite{Rudin:92,Chan:01,ChambolleA:04,ElmoatazLB:08}, image
model-based methods~\cite{ChanS:05}, and sparse
representations~\cite{MairalES:08}.  \emph{Compressed sensing}
acquires and reconstructs signals by taking only a limited number of
measurements~\cite{CandesLMW:11,WrightGRM:09}. It assumes that signals
are sparse and finds solutions to underdetermined linear systems by
$\ell_1$ techniques.  \emph{Matrix completion} recovers the entire
matrix from a subset of its entries by assuming that the matrix is of
low rank. It was originally proposed in~\cite{CandesR:09} and
extensions include a noisy version
in~\cite{CandesP:10,KeshavanMO:10} and decentralized algorithms via graphs~\cite{MardaniMG:13}.~ \emph{Robust principal component
  analysis} recovers a low-rank matrix from corrupted
measurements~\cite{CandesLMW:11,WrightGRM:09}; it separates an image
into two parts: a smooth background and a sparse foreground. In
contrast to principal component analysis, it is robust to grossly
corrupted entries.

Existing work related to signal recovery based on spectral graph theory includes: 1) interpolation of bandlimited graph signals to recover bandlimited graph signals from a set with specific properties, called the uniqueness set~\cite{Pesenson:08, NarangGO:13}. Extensions include the sampling theorem on graphs~\cite{AnisGO:14} and fast distributed algorithms~\cite{WangLG:14}; and 2) smooth regularization on graphs to recover smooth graph signals from random samples~\cite{ZhuGL:03,ZhouS:04,BelkinNS:06}.

\mypar{Contributions}  The contributions of the
  paper are as follows: 
\begin{itemize}
\item a novel algorithm for the general recovery problem that unifies
  existing algorithms, such as signal inpainting, matrix
  completion, and robust principal component analysis;
\item a novel graph signal inpainting algorithm with analysis of the
  associated estimation error;
\item a novel graph signal matrix completion algorithm with analysis
  of a theoretical connection between graph total variation and
  nuclear norm; and
\item novel algorithms for anomaly detection of graph signals with
  analysis of the associated detection accuracy, and 
  for robust graph signal inpainting.
\end{itemize}

\mypar{Outline of the paper} Section~\ref{sec:background} formulates
the problem and briefly reviews \DSPG, which lays the foundation for
this paper; Section~\ref{sec:gsr} describes the proposed solution for
a graph signal recovery problem. Sections that follow study
three subproblems; graph signal inpainting in Section~\ref{sec:gsi},
graph signal matrix completion in Section~\ref{sec:gsmc}, and anomaly
detection of graph signals in Section~\ref{sec:ad}. Algorithms are
evaluated in Section~\ref{sec:results} on real-world recovery
problems, including online blog classification, bridge condition
identification, temperature estimation, recommender system for jokes and expert opinion
combination. Section~\ref{sec:conclusions} concludes the paper and
provides pointers to future directions.

\section{Discrete Signal Processing on Graphs}
\label{sec:background}

We briefly review relevant concepts from \DSPG; for more details,
see~\cite{SandryhailaM:13,SandryhailaM:131}.  \DSPG\ is a theoretical
framework that generalizes classical discrete signal processing from
regular domains, such as lines and rectangular lattices, to arbitrary,
irregular domains commonly represented by graphs, with applications in
signal compression, denoising and classification, semi-supervised
learning and data recovery~\cite{ChenCRBGK:13,ChenSLWMRBGK:14,ChenSMK:13, ChenSMK:14a}.

\mypar{Graph shift} In \DSPG, signals are represented by a graph $G =
(\V,\Adj)$, where $\V = \{v_0,\ldots, v_{N-1}\}$ is the set of nodes,
and $\Adj \in \C^{N \times N}$ is the~\emph{graph shift}, or a
weighted adjacency matrix that represents the connections of the graph
$G$, either directed or undirected. The $n$th signal element
corresponds to the node $v_n$, and the edge weight $\Adj_{n,m}$ between
nodes $v_n$ and $v_m$ is a quantitative expression of the underlying
relation between the $n$th and the $m$th signal samples, such as a
similarity, a dependency, or a communication pattern.

\mypar{Graph signal} \
Given the graph representation $G = (\V,\Adj)$, a \emph{graph signal}
is defined as a map on the graph that assigns the signal samples $x_n\in\C$ to the
node $v_n$.  Once the node order is fixed, the graph
signal can also be written as a vector
\begin{equation}
\label{eq:graph_signal}
  \x \ = \ \begin{bmatrix}
 x_0, x_1, \ldots, x_{N-1}
\end{bmatrix}^T \in \C^N.
\end{equation}

\mypar{Graph Fourier transform} In general, a Fourier transform
corresponds to the expansion of a signal into basis functions that are
invariant to filtering. This invariant basis is the eigenbasis of the graph shift
$\Adj$ (or, if the complete eigenbasis does not exist, the Jordan
eigenbasis of $\Adj$~\cite{SandryhailaM:13}).  For simplicity, assume that $\Adj$ has a complete eigenbasis, and the
spectral decomposition of $\Adj$ is~\cite{VetterliKG:12},
\begin{equation}
  \label{eq:eigendecomposition}
  \Adj=\Vm\Eig\Vm^{-1},
\end{equation}
where the eigenvectors of $\Adj$ form the columns of matrix $\Vm$, and
$\Eig\in\C^{N\times N}$ is the diagonal matrix of the corresponding
eigenvalues $\lambda_0, \, \lambda_1, \, \ldots, \, \lambda_{N-1}$ of $\Adj$.  The
\emph{graph Fourier transform} of a graph
signal~\eqref{eq:graph_signal} is then 
\begin{equation}
  \label{eq:graph_FT}
  \widetilde{\x} = \begin{bmatrix} \widetilde{x}_0 & \widetilde{x}_1 & \ldots & \widetilde{x}_{N-1} \end{bmatrix}^T = \Vm^{-1} \x,
\end{equation}
where $\widetilde{x}_n$ in~\eqref{eq:graph_FT} represents the signal's
expansion in the eigenvector basis and forms the frequency content of
the graph signal $\x$.  The \emph{inverse graph Fourier transform}
reconstructs the graph signal from its frequency content by combining the
graph frequency components weighted by the coefficients of the
signal's graph Fourier transform,
\begin{equation*}
  \label{eq:graph_FT_inverse}
\x = \Vm \widetilde{\x}.
\end{equation*}

\mypar{Variation on graphs} Signal smoothness is a qualitative
characteristic that expresses how much signal samples vary with
respect to the underlying signal representation domain.  To quantify it,  \DSPG\ uses the graph total variation based on the~\emph{$\ell_p$-norm}~\cite{SandryhailaM:131},
\begin{equation}
  \label{eq:GTVQ}
  \STV_{p} (\x) = \left|\left|\x - \frac{1}{|\lambda_{\max}{(\Adj)}|} \Adj \x \right|\right|_p^p,
\end{equation}
where $\lambda_{\max}{(\Adj)}$ denotes the eigenvalue of $\Adj$ with
the largest magnitude.\footnote{ For simplicity, throughout this paper
  we assume that the graph shift $\Adj$ has been normalized to satisfy
  $|\lambda_{\rm max}(\Adj)| = 1$.} We normalize the graph shift to
guarantee that the shifted signal is properly scaled with respect to the original one. When $p = 2$, we call~\eqref{eq:GTVQ} the~\emph{quadratic form of graph total variation}.

%
%
%

We  summarize the notation in Table~\ref{table:parameters}.

\begin{table}[h]
  \footnotesize
  \begin{center}
    \begin{tabular}{@{}lll@{}}
      \toprule
      {\bf Symbol}  & {\bf Description} & {\bf Dimension}\\
      \midrule \addlinespace[1mm]
      $ \x $ &  graph signal &  $N$\\
      $ \X $ &  matrix of graph signal  &  $N \times L$\\
      $ \Adj $ &  graph shift &  $N \times N$\\
	  $ \widetilde{\Adj } $ & $ (\Id-\Adj)^*(\Id-\Adj)$ &  $N \times N$ \\	  
	  $ \M $ &  accessible indices &  \\
	  $ \U $ &  inaccessible indices &  \\
	  $\x_\M $ & accessible elements in $\x$& \\	  
	  $\X_\M$  & accessible elements in $\X$ & \\	
      $\Shrink(\cdot)$	& element-wise shrinkage function defined in~\eqref{eq:shrinkage} & \\
      $\D(\cdot)$	& singular-value shrinkage function defined in~\eqref{eq:shrinkage_singular} & \\
      \bottomrule
    \end{tabular}
  \end{center}
   \caption{\label{table:parameters}   Key notation used in the
     paper.} 
\end{table}

\section{Graph Signal Recovery}
\label{sec:gsr}
We now formulate the general recovery problem for graph signals to
unify multiple signal completion and denoising problems and generalize
them to arbitrary graphs.  In the sections that follow, we consider
specific cases of the graph signal recovery problem, propose
appropriate solutions, and discuss their implementations and
properties.


Let $\x^{(\ell)} \in \C^N$, $1\leq \ell \leq L$, be graph signals residing on the graph $G = (\V,\Adj)$, and let $\X$ be the $N\times L$ matrix of graph signals,
\begin{equation}
\label{eq:Matrix}
\X = \begin{bmatrix} \x^{(1)} & \x^{(2)} & \ldots & \x^{(L)} \end{bmatrix}.
\end{equation}

Assume that we do not know these signals exactly, but for each signal we have
a corresponding measurement $\t^{(\ell)}$.
Since each $\t^{(\ell)}$ can be corrupted by noise and outliers,
we consider the $N\times L$ matrix of measurements to be
\begin{equation}
  \label{eq:Measurement}
  \T = \begin{bmatrix} \t^{(1)} & \t^{(2)} &\ldots & \t^{(L)} \end{bmatrix} = \X + \W + \E,
\end{equation}
where matrices $\W$ and $\E$ contain noise and outliers,
respectively. Note that an outlier is an observation point that is
distant from other observations, which may be due to variability in
the measurement. We assume that the noise coefficients in $\W$ have
small magnitudes, i.e., they can be upper-bounded by a small value,
and that the matrix $\E$ is sparse, containing few non-zero
coefficients of large magnitude. Furthermore, when
certain nodes on a large graph are not accessible, the measurement
$\t^{(\ell)}$ may be incomplete.  To reflect this, we denote the sets
of indices of accessible and inaccessible nodes as $\M$ and $\U$,
respectively. Note that inaccessible nodes denote that values on those
nodes are far from the ground-truth because of corruption, or because we
do not have access to them.

Signal recovery from inaccessible measurements requires additional
knowledge of signal properties. In this work, we make the following
assumptions: (a) the signals of interest $\x^{(\ell)}$, are smooth
with respect to the representation graph $G = (\V,\Adj)$; we express
this by requiring that the variation of recovered signals be small;
(b) since the signals of interest $\x^{(\ell)}$ are all supported on
the same graph structure, we assume that these graph signals are
similar and provide redundant information; we express this by
requiring that the matrix of graph signals $\X$ has low rank; (c) the
outliers happen with a small probability; we express this by requiring that
the matrix $\E$ be sparse; and (d) the noise has small magnitude;
we express this by requiring that the matrix $\W$ be upper-bounded. We
thus formulate the problem as follows:
\begin{eqnarray}
  \label{eq:Opt}
  \Xw, \Ww, \Ew &=& \arg \min_{\X, \W, \E}\  \alpha \STV_2(\X) + \beta~{\rm rank} (\X) + \gamma \left\|\E\right\|_0,
\nonumber  \\
  \\
  \label{eq:Opt_cond1}
  \text{subject to}&& \left\|\W \right\|_F^2 ~\leq~ \epsilon^2,\\
  \label{eq:Opt_cond2}
  && \T_\M = (\X + \W + \E)_\M,
\end{eqnarray}
where $\Xw, \Ww, \Ew$ denote the optimal solutions of the graph signal
matrix, the noise matrix, and the outlier matrix, respectively,
$\epsilon$ controls the noise level, $\alpha, \beta, \gamma$ are
tuning parameters, and
\begin{eqnarray}
\label{eq:quad}
  \STV_2 (\X) &=& \sum_{\ell=1}^L  {\STV_2 (\x^{(\ell)})} = \left\| \X - \Adj \X \right\|_F^2,
\end{eqnarray}
where $\left\|\cdot \right\|_F$ denotes the Frobenius norm and represents
the cumulative quadratic form of the graph total
variation~\eqref{eq:GTVQ} for all graph signals, and
$\left\|\E\right\|_0$ is the $\ell_0$-norm that is defined as the
number of nonzero entries in $\E$.  The general problem~\eqref{eq:Opt}
recovers the graph signal matrix~\eqref{eq:Matrix} from the noisy
measurements~\eqref{eq:Measurement}, possibly when only a subset of
nodes is accessible.

Instead of using the graph total variation based on $\ell_1$ norm~\cite{SandryhailaM:131}, we use the quadratic form of the graph total variation~\eqref{eq:GTVQ} for two reasons. First, it is computationally
easier to optimize than the $\ell_1$-norm based graph total variation. Second, the $\ell_1$-norm based graph total variation, which penalizes less transient changes than the quadratic
form, is good at separating smooth from non-smooth parts of graph
signals; the goal here, however, is to force graph signals at each
node to be smooth. We thus use the quadratic form of the graph total
variation in this paper and, by a slight abuse of notation, call it
graph total variation for simplicity.

The minimization problem~\eqref{eq:Opt} with
conditions~\eqref{eq:Opt_cond1} and~\eqref{eq:Opt_cond2} reflects all
of the above assumptions: (a) minimizing the graph total variation
$\STV_2(\X)$ forces the recovered signals to be smooth and to lie in
the subspace of ``low'' graph frequencies~\cite{SandryhailaM:131}; (b)
minimizing the rank of $\X$ forces the graph signals to be similar and
provides redundant information; (c) minimizing the $\ell_0$-norm
$\left\|\E \right\|_0$ forces the outlier matrix to have few non-zero
coefficients; (d) condition~\eqref{eq:Opt_cond1} captures the
assumption that the coefficients of $\W$ have small magnitudes; and
(e) condition~\eqref{eq:Opt_cond2} ensures that the solution coincides
with the measurements on the accessible nodes.

Unfortunately, solving~\eqref{eq:Opt} is hard because of the rank and
the $\ell_0$-norm~\cite{DonohoE:03, CandesP:09}. To solve it
efficiently, we relax and reformulate~\eqref{eq:Opt} as follows:
\begin{eqnarray}
  \nonumber
  \Xw, \Ww, \Ew &=& \arg \min_{\X, \W, \E}\ \alpha \STV_2(\X) + \beta \left\|\X \right\|_* + \gamma \left\|\E \right\|_1,\\
  \label{eq:C_Opt} \\
   \label{eq:C_Opt_cond1}
  \text{subject to}
  && \left\|\W \right\|_F^2 ~\leq~ \epsilon^2,\\
     \label{eq:C_Opt_cond2}
  && \T_\M = (\X + \W + \E)_\M.
\end{eqnarray}
In~\eqref{eq:C_Opt}, we replace the rank of $\X$ with the nuclear
norm, $\left\|\X\right\|_*$, defined as the sum of all the singular
values of $\X$, which still promotes low
rank~\cite{CandesR:09,CandesP:10}. We further replace the
$\ell_0$-norm of $\E$ with the $\ell_1$-norm, which still promotes
sparsity of $\E$~\cite{DonohoE:03,CandesP:09}. The minimization
problems~\eqref{eq:Opt} and~\eqref{eq:C_Opt} follow the same
assumptions and promote the same properties, but~\eqref{eq:Opt} is an
ideal version, while~\eqref{eq:C_Opt} is a practically feasible
version, because it is a convex problem and thus easier to solve. We
call~\eqref{eq:C_Opt} the~\emph{graph signal recovery} (GSR) problem;
see Table~\ref{table:algorithms}.

To solve~\eqref{eq:C_Opt} efficiently, we use the alternating direction method of
multipliers (ADMM)~\cite{BoydPCPE:11}. ADMM is an algorithm that is
intended to take advantage of both the decomposability and the superior
convergence properties of the method of multipliers. In ADMM, we first
formulate an augmented function and then iteratively update each
variable in an alternating or sequential fashion, ensuring the convergence of the method~\cite{BoydPCPE:11}. We leave the derivation to the Appendix, and summarize the implementation in Algorithm~\ref{alg:GSR}. 
\begin{algorithm}[h]
  \footnotesize
  \caption{\label{alg:GSR} Graph Signal Recovery}
  \begin{tabular}{@{}ll@{}}
    \addlinespace[1mm]
   {\bf Input} 
      & $\T$  ~~~ matrix of measurements \\
     {\bf Output}  
      & $\Xw$  ~~~ matrix of graph signals\\
      & $\Ww$  ~~~matrix of outliers \\
      & $\Ew$  ~~~~matrix of noise \\
    \addlinespace[2mm]
    {\bf Function} 
    & {\bf GSR($\T$) } \\
    & while the stopping criterion is not satisfied  \\ 
    & ~~while the stopping criterion is not satisfied  \\ 
    &~~~~$\X \leftarrow \D_{\beta\eta^{-1}}{ \left( \X + t (\T-\W-\E-\Co- \eta^{-1} (\Y_1+\Y_2) -\Z) \right)}$  \\
    & ~~end \\
    & ~~$\W \leftarrow \eta (\T-\X-\E-\Co- \eta^{-1} \Y_1) /(\eta+2)  $  \\
    & ~~while the stopping criterion is not satisfied  \\     
    &~~~~$\E \leftarrow \Shrink_{\gamma \eta^{-1}}{ \left( \X + t (\T-\X-\W-\Co- \eta^{-1} \Y_1  ) \right)}$  \\    
    & ~~end \\  
    & ~~$\Z \leftarrow (\Id+2\alpha\eta^{-1} (\Id - \Adj)^* (\Id - \Adj)    )^{-1}(\X- \eta^{-1} \Y_2)$  \\
    & ~~$\Co_\M \leftarrow 0, \Co_\U \leftarrow (\T-\X-\W-\E- \eta^{-1} \Y_1 )_\U$, \\    
    & ~~$\Y_1 \leftarrow \Y_1 - \eta (\T-\X-\W-\E-\Co)$  \\
    & ~~$\Y_2 \leftarrow \Y_2 - \eta (\X-\Z)$  \\
    & end \\
    & {\bf return} $\Xw \leftarrow \X,  ~\Ww \leftarrow \W,  ~\Ew \leftarrow \E$ \\  
     \addlinespace[1mm]
  \end{tabular}
\end{algorithm}
Note that in Algorithm~\ref{alg:GSR}, $\Y_1, \Y_2$ are Lagrangian multipliers, $\eta$ is pre-defined, the step
size $t$ is chosen from backtracking line search~\cite{BoydV:04}, and
operators $\Shrink_\tau$ and $\D_\tau$ are defined for $\tau\geq 0$ as
follows:  $\Shrink_\tau(\X)$ ``shrinks'' every element of $\X$ by
$\tau$ so that the $(n,m)$th element of $\Shrink_\tau(\X)$ is
\begin{equation}
\label{eq:shrinkage}
\Shrink_\tau(\X)_{n,m} = \begin{cases}
\X_{n,m}-\tau,& \text{when \,\,\,} \X_{n,m}\geq\tau, \\
\X_{n,m}+\tau,& \text{when \,\,\,} \X_{n,m}\leq-\tau, \\
0,& \text{otherwise}.
\end{cases}
\end{equation}
Similarly, $\D_\tau(\X)$ ``shrinks'' the singular values of $\X$,
\begin{equation}
\label{eq:shrinkage_singular}
\D_\tau(\X) = \Um \Shrink_{\tau}{(\Sigma)} \Q^*,
\end{equation}
where $\X = \Um \Sigma \Q^*$ denotes the singular value decomposition
of $\X$~\cite{VetterliKG:12} and $*$ denotes the Hermitian transpose. The following stopping criterion is used in the paper: the difference of the objective function between two consecutive iterations is smaller than $10^{-8}$. The bulk of the computational cost is in the singular value decomposition~\eqref{eq:shrinkage_singular} when updating $\X$, which is also involved in the standard implementation of matrix completion.

We now review several well-known algorithms
for signal recovery, including signal inpainting, matrix completion, and robust principal component analysis, and show how
they can be formulated as special cases of the graph signal recovery
problem~\eqref{eq:Opt}. In Sections~\ref{sec:gsi} and~\ref{sec:gsmc}, we show graph counterparts of the signal inpainting and matrix completion problems by minimizing the graph total variation. In Section~\ref{sec:ad}, we show anomaly detection on graphs, which is inspired by robust principal component analysis.

\mypar{Signal inpainting} Signal inpainting is a process of recovering
inaccessible or corrupted signal samples from accessible samples using
regularization~\cite{Rudin:92,Chan:01,ChambolleA:04,ElmoatazLB:08},
that is, minimization of the signal's total variation. The measurement
is typically modeled as
\begin{equation}
  \label{eq:Signal_inpainting}
  \t = \x + \w \in \R^{N},
\end{equation}
where $\x$ is the true signal, and $\w$ is the noise. Assuming we can
access a subset of indices, denoted as $\M$, the task is then to
recover the entire true signal $\x$, based on the accessible
measurement $\t_\M$. We assume that the true signal $\x$ is smooth,
that is, its variation is small. The variation is expressed by a
\emph{total variation} function
\begin{equation}
\label{eq:TV}
\TV{(\x)} \ = \ \sum_{i=1}^{N} | x_i - x_{i-1~{\rm mod}~N}|.
\end{equation}
We then recover the signal $\x$ by solving the following optimization
problem:
\begin{eqnarray}
  \label{eq:TV_Opt}
  \xw &=& \arg \min_\x \, \TV (\x), \\
  \label{eq:TV_Opt_cond}
  \text{subject to}&& \left\|(\x - \t)_\M\right\|_2^2 ~\leq~ \epsilon^2.
\end{eqnarray}
The condition~\eqref{eq:TV_Opt_cond} controls how well the accessible
measurements are preserved. As discussed in
Section~\ref{sec:background}, both the $\ell_1$ norm based graph total
variation and the quadratic form of the graph total
variation~\eqref{eq:GTVQ} are used. Thus, \eqref{eq:TV_Opt} is a
special case of~\eqref{eq:C_Opt} when the graph shift is the cyclic
permutation matrix, $\alpha = 1$, $L=1$, $\beta=\gamma=0$, $\E=0$, and
 conditions~\eqref{eq:C_Opt_cond1} and~\eqref{eq:C_Opt_cond2} are combined
into the single condition~\eqref{eq:TV_Opt_cond}; see
Table~\ref{table:algorithms}.

\mypar{Matrix completion}
\label{sec:matrixcompletion}
Matrix completion recovers a matrix given a subset of
its elements, usually, a subset of rows or columns.  Typically, the
matrix has a low rank, and the missing part is recovered
through rank minimization~\cite{CandesR:09,CandesP:10,KeshavanMO:10}. The matrix is modeled as
\begin{equation}
\label{eq:Matrix_completion}
\T = \X + \W \in \R^{N \times L},
\end{equation}
where $\X$ is the true matrix and $\W$ is the noise. Assuming we can
access a subset of indices, denoted as $\M$, the matrix $\X$ is
recovered from~\eqref{eq:Matrix_completion} as the solution with the
lowest rank:
\begin{eqnarray}
\label{eq:MC_Opt}
	\Xw &=& \arg \min_\X\, \left\| \X  \right\|_* , \\
\label{eq:MC_Opt_cond}
\text{subject to}&& \left\|(\X - \T)_\M \right\|_2^2 ~\leq~ \epsilon^2;
\end{eqnarray}
this is a special case of~\eqref{eq:C_Opt} with $\alpha=\gamma=0$,
$\beta = 1$, $\E=0$, and conditions~\eqref{eq:C_Opt_cond1} and
\eqref{eq:C_Opt_cond2} are combined into the single condition \eqref{eq:MC_Opt_cond}; see
Table~\ref{table:algorithms}. This also means that the values in the matrix are associated with a graph that is represented by the identity matrix, that is, we do not have any prior information about the graph structure.

\mypar{Robust principal component analysis} Similarly to matrix
completion, robust principal component analysis is used for recovering
low-rank matrices.  The main difference is the assumption that all
matrix elements are measurable but corrupted by
outliers~\cite{CandesLMW:11,WrightGRM:09}.  In this setting, the
matrix is modeled as
\begin{equation}
\label{eq:Matrix_PCA}
\T = \X + \E \in \R^{N \times L},
\end{equation}
where $\X$ is the true matrix, and $\E$ is a sparse matrix of outliers.

The matrix $\X$ is recovered from~\eqref{eq:Matrix_PCA} as the
solution with the lowest rank and fewest outliers:
\begin{eqnarray*}
\label{eq:RPCA_Opt}
   	\Xw, \Ew&=& \arg \min_{\X,\E}\, \beta \left\|{\X} \right\|_* + \gamma \left\|\E \right\|_1, \\
   \label{eq:RPCA_Opt_cond}
	\text{subject to}&& \T = \X + \E;
\end{eqnarray*}
this is a special case of~\eqref{eq:C_Opt} with $\alpha=\epsilon=0$,
$\W=0$, and $\M$ contains all the indices; see
Table~\ref{table:algorithms}. Like before, this also means that the matrix is associated with a graph that is represented by the identity matrix, that is, we do not have any prior information about the graph structure.

\section{Graph Signal Inpainting}
\label{sec:gsi}
We now discuss in detail the problem of signal inpainting on graphs. Parts of this section have appeared in~\cite{ChenSLWMRBGK:14},
and we include them here for completeness.

As discussed in Section~\ref{sec:gsr}, signal
inpainting~\eqref{eq:TV_Opt} seeks to recover the missing entries of
the signal $\x$ from incomplete and noisy measurements under the
assumption that two consecutive signal samples in $\x$ have similar
values. Here, we treat $\x$ as a graph signal that is smooth with
respect to the corresponding graph. We thus update the signal
inpainting problem~\eqref{eq:TV_Opt}, and formulate the~\emph{graph
  signal inpainting} problem\footnote{If we build a graph to model a
  dataset by representing signals or images in the dataset as nodes
  and the similarities between each pair of nodes as edges, the
  corresponding labels or values associated with nodes thus form a
  graph signal, and the proposed inpainting algorithm actually tackles
  semi-supervised learning with graphs~\cite{Zhu:05}.} as
\begin{eqnarray}
\label{eq:GTV_Opt}
	\xw &=& \arg \min_\x \, \STV_{2}(\x), \\
\label{eq:GTV_Opt_cond}
\text{subject to}&& \left\|(\x - \t)_\M\right\|_2^2 ~\leq~ \epsilon^2;
\end{eqnarray}
this is a special case of~\eqref{eq:C_Opt} with $L = 1, \beta =
\gamma = 0$; see Table~\ref{table:algorithms}.

\mypar{Solutions}
In general, graph signal inpainting~\eqref{eq:GTV_Opt} can be solved
 by using Algorithm~\ref{alg:GSR}.  However, in special cases, there exist closed-form solutions that do not require iterative
algorithms.

\subsubsection{Noiseless inpainting}
Suppose that the measurement $\t$ in~\eqref{eq:Signal_inpainting} does not
contain noise.  In this case, $\w=0$, and we solve~\eqref{eq:GTV_Opt}
for $\epsilon=0$:
\begin{eqnarray}
\label{eq:GTVM}
\xw &=& \arg \min_\x \, \STV_2 (\x), \\ \nonumber
\label{eq:GTVM_cond}
\text{subject to}&& \x_\M = \t_\M.
\end{eqnarray}
We call the problem~\eqref{eq:GTVM}~\emph{graph signal inpainting via
  total variation minimization} (GTVM)~\cite{ChenSLWMRBGK:14}.

Let  $\widetilde{\Adj} = (\Id-\Adj)^*(\Id-\Adj)$.
By reordering nodes, write $\widetilde{\Adj}$ in  block form as
\begin{equation}
\label{eq:wAdj}
\nonumber
\widetilde{\Adj} = \begin{bmatrix}
\widetilde{\Adj}_{\M\M} & \widetilde{\Adj}_{\M\U}  \\
\widetilde{\Adj}_{\U\M} & \widetilde{\Adj}_{\U\U}
\end{bmatrix},
\end{equation}
and set the derivative of~\eqref{eq:GTVM} to
$0$; the closed-form solution is
\begin{equation}
\label{eq:solution}
\nonumber
\widehat{\x} =
\begin{bmatrix}
\t_\M \\
-\widetilde{\Adj}_{\U\U}^{-1}\widetilde{\Adj}_{\U\M} \t_\M
\end{bmatrix}.
\end{equation}
When $\widetilde{\Adj}_{\U\U}$ is not invertible, a pseudoinverse should be used.

\subsubsection{Unconstrained inpainting}
The graph signal inpainting~\eqref{eq:GTV_Opt} can be formulated as an
unconstrained problem by merging condition~\eqref{eq:GTV_Opt_cond}
with the objective function:
\begin{equation}
  \label{eq:GTVR}
	\xw = \arg \min_\x \, \left\|(\x - \t)_\M \right\|_2^2 + \alpha \STV_{2} (\x),
\end{equation}
where the tuning parameter $\alpha$ controls the trade-off between the
two parts of the objective function.  We call~\eqref{eq:GTVR}
the~\emph{graph signal inpainting via total variation regularization}
(GTVR). GTVR is a convex quadratic problem that has a
closed-form solution.  Setting the derivative of~\eqref{eq:GTVR} to
zero, we obtain the closed-form solution
\begin{equation}
\nonumber
 \xw = \left(  \begin{bmatrix}
 \Id_{\M \M}  & 0 \\ 0 & 0 \end{bmatrix}
 + \alpha \widetilde{\Adj} \right)^{-1}
 \begin{bmatrix}
  \t_\M \\
 \ZeroMatrix
 \end{bmatrix},
\end{equation}
where $ \Id_{\M \M}$ is an identity matrix. When the term in parentheses is not invertible, a pseudoinverse should be adopted.

\mypar{Theoretical analysis}
Let $\x^0$ denote the \emph{true} graph signal that we are trying to
recover.  Assume that $\STV_{2}{(\x^0)} = \eta^2$ and $\x^0$
satisfies~\eqref{eq:GTV_Opt_cond}, so that $\left\|\x^0_\M - \t_\M\right\|_2^2
~\leq~ \epsilon^2.$ Similarly to~\eqref{eq:wAdj}, we write $\Adj$ in a
block form as
\begin{equation}
\nonumber
{\Adj} \ = \ \begin{bmatrix}
{\Adj}_{\M\M} & {\Adj}_{\M\U}  \\
{\Adj}_{\U\M} & {\Adj}_{\U\U}
\end{bmatrix}.
\end{equation}
The following results, proven in~\cite{ChenSLWMRBGK:14}, establish an
upper bound on the error of the solution to graph signal
inpainting~\eqref{eq:GTV_Opt}.

\begin{myLem}
\label{lem:bound}
The error $\left\|\x^0 - \xw\right\|_2$ of the solution $\xw$ to the graph signal
inpainting problem~\eqref{eq:GTV_Opt} is bounded as
\begin{eqnarray}
\nonumber
\left\|\x^0 - \xw\right\|_2 &\leq& \frac{q}{2}\left\|(\x^0 - \xw)_\U\right\|_2 + p|\epsilon| +|\eta|,
\end{eqnarray}
where
$$
p = \left|\left| \begin{bmatrix} \Id_{\M\M} + \Adj_{\M\M} \\
      \Adj_{\U\M} \end{bmatrix} \right|\right|_2,\,\,\, q =
\left|\left| \begin{bmatrix} \Adj_{\M\U} \\ \Id_{\U\U} +
      \Adj_{\U\U} \end{bmatrix} \right|\right|_2,
$$
and $\left\|\cdot\right\|_2$ for matrices denotes the spectral norm.
\end{myLem}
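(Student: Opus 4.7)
The plan is to control $\|\e\|_2$ where $\e = \x^0 - \xw$ by splitting $\e$ through the identity
\[
2\e \;=\; (\Id-\Adj)\e \;+\; (\Id+\Adj)\e,
\]
so that by the triangle inequality
\[
\|\e\|_2 \;\leq\; \tfrac{1}{2}\|(\Id-\Adj)\e\|_2 \;+\; \tfrac{1}{2}\|(\Id+\Adj)\e\|_2.
\]
The first term will absorb the smoothness defect $\eta$ (via optimality of $\xw$), while the second, when expanded in blocks, produces the $p$ and $q$ terms.

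First I would bound the $(\Id-\Adj)$ piece. By hypothesis $\|(\Id-\Adj)\x^0\|_2 = |\eta|$, and since $\x^0$ is feasible for~\eqref{eq:GTV_Opt} and $\xw$ is its optimum, $\STV_2(\xw)\leq \STV_2(\x^0)=\eta^2$, so $\|(\Id-\Adj)\xw\|_2\leq|\eta|$. Another triangle inequality then yields $\|(\Id-\Adj)\e\|_2\leq 2|\eta|$, contributing $|\eta|$ to the final bound.

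Next I would expand the $(\Id+\Adj)$ piece along the accessible/inaccessible partition. Using the block decomposition of $\Adj$,
\[
(\Id+\Adj)\e \;=\; \begin{bmatrix}\Id_{\M\M}+\Adj_{\M\M}\\ \Adj_{\U\M}\end{bmatrix}\!\e_\M \;+\; \begin{bmatrix}\Adj_{\M\U}\\ \Id_{\U\U}+\Adj_{\U\U}\end{bmatrix}\!\e_\U,
\]
so the triangle inequality together with the definition of the spectral norm gives $\|(\Id+\Adj)\e\|_2\leq p\|\e_\M\|_2 + q\|\e_\U\|_2$. The accessible block is then controlled using feasibility: both $\x^0$ and $\xw$ satisfy $\|(\cdot-\t)_\M\|_2\leq|\epsilon|$, so by triangle inequality $\|\e_\M\|_2\leq 2|\epsilon|$. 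Substituting back,
\[
\tfrac{1}{2}\|(\Id+\Adj)\e\|_2 \;\leq\; p|\epsilon| \;+\; \tfrac{q}{2}\|\e_\U\|_2,
\]
and combining with the $(\Id-\Adj)$ bound yields exactly the claimed inequality.

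I do not expect a serious obstacle: the only non-routine step is the decomposition $2\e=(\Id-\Adj)\e+(\Id+\Adj)\e$, which is what makes the smoothness budget $\eta$ and the block-structure constants $p,q$ appear in the same estimate. Everything else is the triangle inequality, feasibility of $\x^0$ and $\xw$, and the optimality inequality $\STV_2(\xw)\leq\STV_2(\x^0)$.
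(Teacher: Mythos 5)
Your argument is correct, and it is essentially the intended one: the identity $2\e=(\Id-\Adj)\e+(\Id+\Adj)\e$, the optimality/feasibility bounds $\|(\Id-\Adj)\xw\|_2\leq|\eta|$ and $\|\e_\M\|_2\leq 2|\epsilon|$, and the column-block split of $\Id+\Adj$ reproduce the stated constants $p$, $q$, and the factors $\tfrac{q}{2}$, $p|\epsilon|$, $|\eta|$ exactly. The paper itself defers this proof to the cited ICASSP reference rather than printing it, but your derivation is the one the lemma's statement encodes.
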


\begin{myThm}
  \label{thm:bound}
  If $q < 2$, then the error on the
  inaccessible part of the solution $\xw$ is bounded as
\begin{eqnarray}
  \nonumber
  \left\|(\x^0 - \xw)_\U\right\|_2 &\leq& \frac{ 2p|\epsilon| + 2 |\eta|}{2 - q}.
\end{eqnarray}
\end{myThm}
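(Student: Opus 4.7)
The plan is to deduce Theorem~1 directly from Lemma~1 by a short algebraic rearrangement, using only the trivial observation that restricting a vector to a subset of coordinates cannot increase its $\ell_2$ norm.

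First I would note that $\U$ indexes a subset of the entries of $\x^0 - \xw$, so
\begin{equation}
\nonumber
\left\|(\x^0 - \xw)_\U\right\|_2 \;\leq\; \left\|\x^0 - \xw\right\|_2.
\end{equation}
Combining this inequality with the bound provided by Lemma~\ref{lem:bound} yields
\begin{equation}
\nonumber
\left\|(\x^0 - \xw)_\U\right\|_2 \;\leq\; \frac{q}{2}\left\|(\x^0 - \xw)_\U\right\|_2 + p|\epsilon| + |\eta|.
\end{equation}

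Next I would move the $(q/2)\|(\x^0-\xw)_\U\|_2$ term to the left-hand side, obtaining
\begin{equation}
\nonumber
\left(1 - \tfrac{q}{2}\right)\left\|(\x^0 - \xw)_\U\right\|_2 \;\leq\; p|\epsilon| + |\eta|.
\end{equation}
The hypothesis $q < 2$ is exactly what makes the coefficient $1 - q/2$ strictly positive, so I can divide both sides by it without flipping the inequality; multiplying numerator and denominator by $2$ then gives the claimed bound
\begin{equation}
\nonumber
\left\|(\x^0 - \xw)_\U\right\|_2 \;\leq\; \frac{2p|\epsilon| + 2|\eta|}{2 - q}.
\end{equation}

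There is no real obstacle here: once Lemma~\ref{lem:bound} is in hand, the theorem is a one-line manipulation, and the role of the assumption $q < 2$ is precisely to guarantee that the self-bounding inequality for $\|(\x^0-\xw)_\U\|_2$ can be solved for that quantity. The only thing worth emphasizing in the write-up is why the restriction-to-$\U$ step is legitimate, i.e.\ that the $\ell_2$ norm of a subvector is dominated by that of the full vector.
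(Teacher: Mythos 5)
Your proof is correct and is essentially the intended argument: the paper defers the proof of Theorem~\ref{thm:bound} to the cited reference, but the derivation there is exactly this — combine Lemma~\ref{lem:bound} with $\left\|(\x^0-\xw)_\U\right\|_2 \leq \left\|\x^0-\xw\right\|_2$, rearrange the self-bounding inequality, and use $q<2$ to divide by $1-q/2$. No gaps.
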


The condition $q < 2$ may not hold for some
matrices; however, if $\Adj$ is symmetric, we have $q ~\leq~ \left\|\Id +
\Adj \right\|_2$ $~\leq~ \left\|\Id\right\|_2 + \left\|\Adj\right\|_2 = 2$, since $\left\|\Adj\right\|_2 =
1$. Since $q$ is related to the size of the inaccessible part, when we take a larger number of measurements, $q$ becomes smaller, which leads to a tighter upper bound. Also, note that the upper bound is related to the smoothness of
the true graph signal and the noise level of the accessible
part. A central assumption of any inpainting technique is that the
true signal $\x^0$ is smooth. If this assumption does not hold, then
the upper bound is large and useless. When the noise level of the
accessible part is smaller, the measurements from the accessible part
are closer to the true values, which leads to a smaller estimation
error.

%

\section{Graph Signal Matrix Completion}
\label{sec:gsmc}
We now consider graph signal matrix completion---another
important subproblem of the general graph signal recovery problem~\eqref{eq:Opt}.

As discussed in
Section~\ref{sec:gsr}, matrix
completion seeks to recover missing entries of matrix $\X$ from the
incomplete and noisy measurement matrix~\eqref{eq:Matrix_completion}
under the assumption that $\X$ has low rank.  Since we view $\X$ as a
matrix of graph signals (see~\eqref{eq:Matrix}), we also assume that
the columns of $\X$ are smooth graph signals.  In this case, we update
the matrix completion problem~\eqref{eq:MC_Opt} and formulate the
\emph{graph signal matrix completion} problem as
  \begin{eqnarray}
  \label{eq:MCG_Opt}
   	\Xw &=& \arg \min_\X \, \STV_{2}(\X) + \beta \left\|\X\right\|_*, \\ \nonumber
   \label{eq:MCG_Opt_cond} 
	\text{subject to}&& \left\|(\X - \T)_\M \right\|_F^2 ~\leq~ \epsilon^2;
   \end{eqnarray}
   this is a special case of~\eqref{eq:C_Opt} with $ \alpha =
   1, \gamma = 0$; see Table~\ref{table:algorithms}.

\mypar{Solutions}
\label{sec:alternative_sol}
In addition to Algorithm~\ref{alg:GSR} that can be used to solve the
graph signal matrix completion problem~\eqref{eq:MCG_Opt}, there exist
alternative approaches that we discuss next.

\setcounter{subsubsection}{0}
\subsubsection{Minimization}
Here we consider the noise-free case. Suppose the measurement matrix $\T$ in~\eqref{eq:Matrix_completion}
does not contain noise.  We thus solve~\eqref{eq:MCG_Opt} for
$\W=0$ and $\epsilon=0$,
  \begin{eqnarray}
  \label{eq:MCM_Opt}
   	\Xw &=& \arg \min_\X\, \STV_{2}(\X) + \beta \left\|\X\right\|_*, \\
   \nonumber
	\text{subject to}&&  \X_\M = \T_\M.
   \end{eqnarray}
   We call~\eqref{eq:MCM_Opt}~\emph{graph signal matrix
     completion via total variation minimization} (GMCM). This is a
   constrained convex problem that can be solved by projected
   generalized gradient descent~\cite{BoydV:04}. We first split the
   objective function into two components, a convex, differential
   component, and a convex, nondifferential component; based on these two
   components, we formulate a proximity function and then solve it
   iteratively.  In each iteration, we solve the proximity function
   with an updated input and project the result onto the feasible
   set. To be more specific, we split the objective
   function~\eqref{eq:MCM_Opt} into a convex, differentiable component
   $\STV_{2}(\X)$, and a convex, nondifferential component $\beta
   \left\|\X\right\|_*$. The proximity function is then defined as
\begin{eqnarray}
	\prox_{t} (\X) 	& = & \arg \min_{\Z} \frac{1}{2t} \left\|\X-\Z\right\|^2 + \beta \left\|\X\right\|_*
	\nonumber \\  \nonumber
	 & = & \D_{t\beta} (\Z),
\end{eqnarray}
where $\D(\cdot)$ is defined in~\eqref{eq:shrinkage_singular}.  In
each iteration, we first solve for the proximity function and project the
result onto the feasible set as
\begin{equation}
  \nonumber
  \X \leftarrow {\proj} \left( \prox_t  \left( \X - t \nabla \STV_{2}(\X) \right) \right),
\end{equation}
where $t$ is the step size that is chosen from the backtracking line
search~\cite{BoydV:04}, and $\proj (\X)$ projects $\X$ to the feasible
set so that the $(n,m)$th element of $\proj(\X)$ is
\begin{equation}
  \label{eq:proj}
  \nonumber
  \proj(\X)_{n,m} = \begin{cases}
    \T_{n,m},& \text{when \,\,\,} (n,m) \in \M, \\
    \X_{n,m},& \text{when \,\,\,} (n,m) \in \U. \\
\end{cases}
\end{equation}
For implementation details, see Algorithm~\ref{alg:GMCM}. The bulk of the computational cost of Algorithm~\ref{alg:GMCM} is in the singular value decomposition~\eqref{eq:shrinkage_singular} when updating $\X$, which is also involved in the standard implementation of matrix completion.
\begin{algorithm}[t]
  \footnotesize
  \caption{\label{alg:GMCM} Graph Signal Matrix Completion via Total
    Variation Minimization}
  \begin{tabular}{@{}ll@{}}
    \addlinespace[1mm]
    {\bf Input}
    &  $\T$  ~~~ matrix of measurements\\
    &  $\Xw$  ~~~ matrix of graph signals\\
    \addlinespace[2mm]
    {\bf Function} & {\bf GMCM($\T$) } \\
    & initialize  $\X$, such that  $\X_\M = \T_\M$ holds \\
    & while the stopping criterion is not satisfied   \\
    & ~~~Choose step size $t$ from backtracking line search \\
    & ~~~$\X \leftarrow \proj \left( \D_{t\beta} ( \X - 2 t  \widetilde{\Adj} \X )   \right)$ \\
    & end\\
    & {\bf return} $\Xw \leftarrow \X$ \\
    \addlinespace[1mm]
  \end{tabular}
\end{algorithm}

\subsubsection{Regularization}
The graph signal matrix completion~\eqref{eq:MCG_Opt} can be formulated as an unconstrained problem,
\begin{eqnarray}
  \label{eq:GMCR}
\noindent  \Xw &=& \arg \min_\X \, \left\|\X_\M - \T_\M \right\|_F^2 + \alpha \STV_{2}(\X) + \beta \left\|\X\right\|_*.
\end{eqnarray}
We call~\eqref{eq:GMCR}~\emph{graph signal matrix
  completion via total variation regularization} (GMCR). This is an
unconstrained convex problem and can be solved by generalized gradient
descent. Similarly to projected generalized gradient descent, generalized
gradient descent also formulates and solves a proximity function. The
only difference is that generalized gradient descent does not need to
project the result after each iteration to a feasible set. To be more
specific, we split the objective funtion~\eqref{eq:MCM_Opt} into a
convex, differentiable component $\left\|\X_\M - \T_\M \right\|_F^2 + \alpha
\STV_{2}(\X)$, and a convex, non-differential component $\beta
\left\|\X\right\|_*$. The proximity function is then defined as
\begin{eqnarray}
	\nonumber
	\prox_{t} (\X) 	& = & \arg \min_{\Z} \frac{1}{2t} \left\|\X-\Z\right\|^2 + \beta \left\|\X\right\|_*
	\nonumber \\
	 & = & \D_{t\beta} (\Z),
\end{eqnarray}
where $\D(\cdot)$ is defined in~\eqref{eq:shrinkage_singular}.  In
each iteration, we first solve for the proximity function as
\begin{equation}
  \X \leftarrow \prox_t  \left( \X - t \nabla \left(\left\|\X_\M - \T_\M\right\|_F^2 + \alpha \STV_{2}(\X)  \right) \right),
\end{equation}
where $t$ is the step size that is chosen from the backtracking line
search~\cite{BoydV:04}; for implementation details, see
Algorithm~\ref{alg:GMCR}.
\begin{algorithm}[t]
  \footnotesize
  \caption{\label{alg:GMCR} Graph Signal Matrix Completion via Total Variation Regularization}
  \begin{tabular}{@{}ll@{}}
    \addlinespace[1mm]
    {\bf Input}
    &  $\T$  ~~~matrix of measurements\\
    &  $\Xw $  ~~~matrix of graph signals\\
    \addlinespace[2mm]
    {\bf Function} & {\bf GMCR($\T$)} \\
    & initialize  $\X$ \\
    & while the stopping criterion is not satisfied   \\
    & ~~~Choose step size $t$ from backtracking line search \\
    & ~~~$\X \leftarrow \D_{t\beta} \left( \X - 2 t  (\X_\M - \T_\M) - 2 \alpha t  \widetilde{\Adj}  \X  \right)$ \\
    & end\\
    & {\bf return} $\Xw \leftarrow \X$ \\
    \addlinespace[1mm]
  \end{tabular}
\end{algorithm}
The bulk of the computational cost of Algorithm~\ref{alg:GMCR} is in the singular value decomposition~\eqref{eq:shrinkage_singular} when updating $\X$, which is also involved in the standard implementation of matrix completion.

\mypar{Theoretical analysis}
We now discuss properties of the proposed algorithms. The key in
classical matrix completion is to minimize the nuclear norm of a
matrix. Instead of considering general matrices, we only focus on
graph signal matrices, whose corresponding algorithm is to minimize
both the graph total variation and the nuclear norm. We study the
connection between graph total variation and nuclear norm of a matrix
to reveal the underlying mechanism of our algorithm.

 Let $\X$ be a $N \times L$ matrix of rank $r$ with singular
  value decomposition $\X = \Um \Sigma \Q^*$, where $\Um
  = \begin{bmatrix} \u_1 & \u_2 & \ldots & \u_r \end{bmatrix}$, $\Q
  = \begin{bmatrix} \q_1 & \q_2 & \ldots & \q_r \end{bmatrix}$, and
  $\Sigma$ is a diagonal matrix with $ \sigma_i$ along the diagonal, $i = 1, \cdots, r$.
\begin{myLem}
  \label{thm:TVMC}
  \begin{eqnarray}
    \nonumber
    \STV_{2} (\X) = \sum_{i=1}^{r} {\sigma_i^2 \left\| (\Id - \Adj) \u_i \right\|_2^2}.
  \end{eqnarray}
\end{myLem}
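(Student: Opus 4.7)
The plan is to start from the definition $\STV_2(\X) = \|\X - \Adj \X\|_F^2 = \|(\Id-\Adj)\X\|_F^2$ given in~\eqref{eq:quad}, substitute the (thin) SVD $\X = \Um \Sigma \Q^*$, and then exploit the cyclic property of the trace together with the column-orthonormality of $\Q$ to peel off $\Q^*\Q = \Id_r$. Concretely, I would write
\begin{equation}
\nonumber
\STV_2(\X) \;=\; \trace\bigl(\Q \Sigma \Um^* (\Id-\Adj)^*(\Id-\Adj)\Um \Sigma \Q^*\bigr),
\end{equation}
apply $\trace(BC) = \trace(CB)$ to cycle $\Q^*$ to the front, and collapse $\Q^*\Q = \Id_r$.

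Next, because $\Sigma$ is diagonal with entries $\sigma_i$, the remaining trace $\trace\bigl(\Sigma \Um^* \widetilde{\Adj} \Um \Sigma\bigr)$ is a weighted sum of diagonal entries of $\Um^* \widetilde{\Adj} \Um$, where $\widetilde{\Adj} = (\Id-\Adj)^*(\Id-\Adj)$. Explicitly, the $(i,i)$ entry of $\Sigma \Um^* \widetilde{\Adj} \Um \Sigma$ is $\sigma_i^2 \u_i^* \widetilde{\Adj} \u_i$, so summing over $i$ gives
\begin{equation}
\nonumber
\STV_2(\X) \;=\; \sum_{i=1}^{r} \sigma_i^2 \, \u_i^* (\Id-\Adj)^*(\Id-\Adj) \u_i \;=\; \sum_{i=1}^{r} \sigma_i^2 \left\|(\Id-\Adj)\u_i\right\|_2^2,
\end{equation}
which is the claimed identity.

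There is no substantive obstacle here; the argument is essentially a Frobenius-norm bookkeeping exercise. The only point that deserves a line of care is making clear that one is using the \emph{thin} SVD (so that $\Um \in \C^{N\times r}$ and $\Q \in \C^{L\times r}$ both have orthonormal columns and $\Q^*\Q = \Id_r$ even when $r < \min(N,L)$); otherwise one might worry that $\Q$ is not square and try to invoke $\Q\Q^* = \Id$, which need not hold. With the thin SVD convention already set up in the statement of the lemma, this subtlety does not cause any real trouble, and the proof reduces to the two displayed lines above.
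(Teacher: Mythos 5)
Your proposal is correct and follows essentially the same route as the paper's own proof: substitute the thin SVD into $\|(\Id-\Adj)\X\|_F^2$, cycle the trace to collapse $\Q^*\Q=\Id_r$, and read off the diagonal of $\Sigma\Um^*(\Id-\Adj)^*(\Id-\Adj)\Um\Sigma$. Your explicit remark about using the thin SVD (so that $\Q^*\Q=\Id_r$ holds while $\Q\Q^*$ need not) is a small point of care the paper leaves implicit, but it does not change the argument.
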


\begin{proof}
  \begin{eqnarray*}
    \STV_{2} (\X) & = & \left\| \X - \Adj \X \right\|_F^2
    \ \stackrel{(a)}{=} \ \left\| (\Id - \Adj) \Um \Sigma \Q^*\right\|_F^2,
    \nonumber \\
    & =  & {\rm Tr} \Biggl(\Q \Sigma \Um^*   (\Id - \Adj)^*(\Id - \Adj) \Um \Sigma \Q^* \Biggr),
    \nonumber \\
    & \stackrel{(b)}{=}  & {\rm Tr} \Biggl(   \Sigma \Um^*   (\Id - \Adj)^*(\Id - \Adj) \Um \Sigma \Q^* \Q \Biggr),
    \nonumber \\
& {=}  & \left\| (\Id - \Adj) \Um \Sigma \right\|_F^2
\ \stackrel{(c)}{=}  \  \sum_{i=1}^{r} {\sigma_i^2 \left\| (\Id - \Adj) \u_i \right\|_2^2},
\end{eqnarray*}
where $(a)$ follows from the singular value decomposition; $(b)$
from the cyclic property of the trace operator; and $(c)$ from
$\Sigma$ being a diagonal matrix.
\end{proof}

From Lemma~\ref{thm:TVMC}, we see that graph total variation is
related to the rank of $\X$; in other words, lower rank naturally
leads to smaller graph total variation.

\begin{myThm}
\begin{eqnarray}
  \nonumber
  \STV_{2} (\X)  \leq \STV_2 (\Um) \left\|\X\right\|_*^2.
\end{eqnarray}
\end{myThm}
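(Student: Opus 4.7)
The plan is to combine the previous Lemma with a simple bound on each singular value by the nuclear norm. By Lemma~\ref{thm:TVMC}, I already have the key identity
\begin{equation*}
\STV_2(\X) \;=\; \sum_{i=1}^{r} \sigma_i^2 \,\|(\Id-\Adj)\u_i\|_2^2.
\end{equation*}
The first step is to observe that $\STV_2(\Um)$, applied to the $N\times r$ matrix whose columns are the left singular vectors, expands by the definition in~\eqref{eq:quad} as
\begin{equation*}
\STV_2(\Um) \;=\; \|\Um-\Adj\Um\|_F^2 \;=\; \sum_{i=1}^{r}\|(\Id-\Adj)\u_i\|_2^2,
\end{equation*}
so the inequality I need reduces to a weighted-sum inequality in the nonnegative numbers $\sigma_i^2$ and $a_i := \|(\Id-\Adj)\u_i\|_2^2$.

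Next, I would use the trivial bound $\sigma_i \le \sum_{j=1}^{r}\sigma_j = \|\X\|_*$ for each $i$, which immediately gives $\sigma_i^2 \le \|\X\|_*^2$. Pulling this out of the sum yields
\begin{equation*}
\sum_{i=1}^{r}\sigma_i^2\, a_i \;\le\; \|\X\|_*^2 \sum_{i=1}^{r} a_i \;=\; \|\X\|_*^2\,\STV_2(\Um),
\end{equation*}
which is exactly the claim. I would finish by chaining the two identities with this inequality.

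There is no real obstacle here: the only point worth double-checking is the meaning of $\STV_2(\Um)$ when $\Um$ is not square (it is $N\times r$), and that the definition in~\eqref{eq:quad} extends naturally via the Frobenius norm of $\Um - \Adj\Um$. Once that bookkeeping is clear, the proof is a two-line chain starting from Lemma~\ref{thm:TVMC} and ending with the singular-value bound. The overall message is that the nuclear norm is the slack in replacing the individual weights $\sigma_i^2$ by a uniform upper bound across the rank-$r$ decomposition.
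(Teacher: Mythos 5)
Your proof is correct, and it reaches the bound by a more elementary route than the paper. The paper starts from the intermediate identity $\STV_2(\X) = \left\|(\Id-\Adj)\Um\Sigma\right\|_F^2$ appearing inside the proof of Lemma~\ref{thm:TVMC} and chains two matrix-norm facts: submultiplicativity of the Frobenius norm, $\left\|(\Id-\Adj)\Um\Sigma\right\|_F^2 \leq \left\|(\Id-\Adj)\Um\right\|_F^2\left\|\Sigma\right\|_F^2$, followed by the norm equivalence $\left\|\Sigma\right\|_F \leq \left\|\Sigma\right\|_*$. You instead work directly with the scalar decomposition $\sum_{i}\sigma_i^2\left\|(\Id-\Adj)\u_i\right\|_2^2$ from the statement of the lemma and bound each weight termwise via $\sigma_i \leq \sum_j \sigma_j = \left\|\X\right\|_*$. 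Both are short arguments, but yours buys a strictly sharper intermediate inequality: you in fact establish $\STV_2(\X) \leq \sigma_1^2\,\STV_2(\Um)$ (largest singular value squared), whereas the paper's intermediate step gives $\left(\sum_i\sigma_i^2\right)\STV_2(\Um)$; since $\sigma_1^2 \leq \sum_i\sigma_i^2 \leq \left(\sum_i\sigma_i\right)^2$, both dominate the stated nuclear-norm bound, but yours is the tighter of the two and avoids invoking any matrix-norm machinery. Your bookkeeping point about $\STV_2(\Um)$ for the $N\times r$ matrix $\Um$ is resolved exactly as the paper resolves it, namely by reading~\eqref{eq:quad} as $\left\|\Um-\Adj\Um\right\|_F^2$.
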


\begin{proof}
  From~Lemma~\ref{thm:TVMC}, we have
  \begin{eqnarray*}
    \STV_{2} (\X) &  {=} & \left\| (\Id - \Adj) \Um \Sigma \right\|_F^2
    \ \stackrel{(a)}{\leq} \ \left\| (\Id - \Adj) \Um \right\|_F^2 \left\|\Sigma\right\|_F^2,
    \nonumber \\
    & \stackrel{(b)}{\leq} & \left\| (\Id - \Adj) \Um \right\|_F^2 \left\|\Sigma \right\|_*^2
    \ {=} \ \left\| \Um - \Adj \Um \right\|_F^2 \left\|\X \right\|_*^2,
  \end{eqnarray*}
  where $(a)$ follows from the submultiplicativity of the Frobenius norm;
  and $(b)$ from the norm equivalence~\cite{VetterliKG:12}.
\end{proof}
In Theorem~\ref{thm:TVMC}, we see that the graph total variation is
related to two quantities: the nuclear norm of $\X$ and
the graph total variation of the left singular vectors of $\X$. The
first quantity reveals that minimizing the nuclear norm potentially leads to
minimizing the graph total variation.  We can thus rewrite the objective
function~\eqref{eq:MCG_Opt} as
\begin{eqnarray}
  \nonumber
  \STV_{2}(\X) + \beta \left\|\X\right\|_*  & \leq & \STV_{2}(\Um)  \left\|\X\right\|_*^2+ \beta  \left\|\X\right\|_*.
\end{eqnarray}
If the graph shift is built from insufficient information, we just
choose a larger $\beta$ to force the nuclear norm to be small, which
causes a small graph total variation in return. The quantity
$\STV_{2} (\Um)$ measures the smoothness of the left singular vectors of $\X$
on a graph shift $\Adj$; in other words, when the left singular vectors are smooth, the graph signal matrix is also smooth. We can further use this quantity to bound the
graph total variation of all graph signals that belong to a subspace spanned by the left singular vectors.

\begin{myThm}
\label{thm:signal_space_smooth}
  Let a graph signal $\x$ belong to the space spanned by $\Um$,
  that is, $\x = \Um \a$, where $\a$ is the vector of representation
  coefficients. Then,
$$
\STV_2(\x) \leq \STV_2(\Um) \left\|\a\right\|_2^2
$$
\end{myThm}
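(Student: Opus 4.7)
The plan is to reduce the inequality to the standard submultiplicativity of the spectral norm applied to $(\Id - \Adj)\Um$ acting on the coefficient vector $\a$, and then to pass from the spectral norm to the Frobenius norm so that $\STV_2(\Um)$ appears on the right-hand side.

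First I would unfold the definition of $\STV_2$ from \eqref{eq:quad} (specialized to a vector) and substitute $\x = \Um\a$, writing
\begin{equation*}
\STV_2(\x) \;=\; \left\|(\Id - \Adj)\x\right\|_2^2 \;=\; \left\|(\Id - \Adj)\Um\,\a\right\|_2^2.
\end{equation*}
Next I would apply the basic operator-norm inequality $\|M\vw\|_2 \leq \|M\|_2 \|\vw\|_2$ with $M = (\Id - \Adj)\Um$ and $\vw = \a$, yielding
\begin{equation*}
\left\|(\Id - \Adj)\Um\,\a\right\|_2^2 \;\leq\; \left\|(\Id - \Adj)\Um\right\|_2^2 \left\|\a\right\|_2^2.
\end{equation*}
Finally, invoking the standard norm equivalence $\|M\|_2 \leq \|M\|_F$ (already used in the preceding theorem as step $(b)$) gives
\begin{equation*}
\left\|(\Id - \Adj)\Um\right\|_2^2 \;\leq\; \left\|(\Id - \Adj)\Um\right\|_F^2 \;=\; \STV_2(\Um),
\end{equation*}
and chaining the two bounds produces the claim.

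There is essentially no hard step: the argument is a routine two-line chain of the same inequalities the authors used in the proof of the previous theorem (submultiplicativity of Frobenius/spectral norms, plus norm equivalence). The only thing to be slightly careful about is that $\STV_2(\Um)$ is defined via the Frobenius norm of a matrix, whereas the submultiplicative step most naturally uses the spectral norm when one factor is a vector; bridging this with $\|\cdot\|_2 \leq \|\cdot\|_F$ is what makes the bound clean. No additional assumption on $\Um$ (such as orthonormality of columns) is needed, so the result holds for an arbitrary representation basis.
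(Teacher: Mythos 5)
Your proof is correct and follows exactly the paper's own argument: substitute $\x = \Um\a$, apply submultiplicativity of the spectral norm, then pass to the Frobenius norm via $\left\|\cdot\right\|_2 \leq \left\|\cdot\right\|_F$ to recognize $\STV_2(\Um)$. No differences worth noting.
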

\begin{proof}
\begin{eqnarray*}
  \STV_{2}(\x) & = & \left\|\x - \Adj \x\right\|_2^2
  \ = \ \left\|(\Id - \Adj) \Um \a\right\|_2^2
  \nonumber \\
  & \stackrel{(a)}{\leq} & \left\|(\Id - \Adj) \Um\right\|_2^2  \left\|\a\right\|_2^2
  \ \stackrel{(b)}{\leq} \ \left\|(\Id - \Adj) \Um\right\|_F^2  \left\|\a\right\|_2^2.
\end{eqnarray*}
where (a) follows from the submultiplicativity of the spectral norm; and
(b) from the norm equivalence~\cite{VetterliKG:12}.
\end{proof}
Theorem~\ref{thm:signal_space_smooth} shows that a graph signal is smooth when it belongs to a subspace spanned by the smooth left singular vectors.

\begin{table*}
  \footnotesize
  \begin{center}
    \begin{tabular}{@{}ll@{}}
      \toprule
      {\bf Graph signal recovery problem} &  {$\Xw, \Ww, \Ew = \arg \min_{\X, \W, \E \in \R^{N \times L}}\  \alpha \STV_2(\X) + \beta \left\|\X\right\|_* + \gamma \left\|\E\right\|_1,$}\\
      &  { subject to $\left\|\W \right\|_F^2 ~\leq~ \epsilon^2, \T_\M = (\X + \W + \E)_\M.$}\\
      \midrule \addlinespace[1mm]
      {\bf Signal inpainting} &  { $L = 1, \beta = 0, \gamma = 0,$ graph shift is the cyclic permutation matrix. }\\
      {\bf Matrix completion} & { $\alpha = 0$, or graph shift is the identity matrix, $\gamma = 0$ }\\
      {\bf Robust principal component analysis} & { $\alpha = 0$, or graph shift is the identity matrix, $\epsilon = 0, \M$ is all the indices in $\T$. } \\
      {\bf Graph signal inpainting} & { $L = 1, \beta = 0, \gamma = 0$ }\\
      {\bf Graph signal matrix completion} & {  $\alpha = 1, \gamma = 0$ }\\
      {\bf Anomaly detection} & { $L = 1, \beta = 0, \M$ is all indices in $\T$ } \\
      {\bf Robust graph signal inpainting} & { $L = 1, \beta = 0$ } \\
      \addlinespace[1mm] \bottomrule
    \end{tabular}
  \end{center}
  \caption{\label{table:algorithms} The table of algorithms.}
\end{table*}

\section{Anomaly Detection}
\label{sec:ad}
We now consider anomaly detection of graph signals, another important
subproblem of the general recovery problem~\eqref{eq:Opt}.

 As discussed in
Section~\ref{sec:gsr}, robust principal component analysis seeks to detect outlier coefficients from a low-rank matrix. Here, anomaly detection of graph signals seeks to  detect outlier coefficients from a smooth graph signal. We assume that the
outlier is sparse and contains few non-zero coefficients of large
magnitude. To be specific, the measurement is modeled as
\begin{equation}
\label{eq:signal_anomaly}
\t = \x + \e \in \R^N,
\end{equation}
where $\x$ is a smooth graph signal that we seek to recover, and the
outlier $\e$ is sparse and has large magnitude on few nonzero
coefficients. The task is to detect the outlier $\e$ from the
measurement $\t$. Assuming that $\x$ is smooth, that is, its variation
is small, and $\e$ is sparse, we propose the ideal optimization problem as follows:
  \begin{eqnarray}
  \label{eq:AD_Opt}
   	\xw, \ew &=& \arg \min_{\x, \e}\,  \left\|\e\right\|_0\\
   \label{eq:AD_Opt_cond}
   \nonumber
	\text{subject to}&&   \STV_{2}(\x) \leq \eta^2,
	\\ 
	\nonumber
	\label{eq:AD_Opt_cond_2}
	&& \t = \x + \e.
   \end{eqnarray}
   To solve it efficiently, instead of dealing with the $\ell_0$ norm, we relax it to the $\ell_1$ norm and reformulate~\eqref{eq:AD_Opt} as follows: \\
\begin{eqnarray}
  \label{eq:rAD_Opt}
   	\xw, \ew &=& \arg \min_{\x, \e}\,  \left\|\e\right\|_1
   	 \\  
   \label{eq:rAD_Opt_cond1}
	\text{subject to}&&   \STV_{2}(\x) \leq \eta^2,
	\\ 
	\label{eq:rAD_Opt_cond2}
	&& \t = \x + \e;
\end{eqnarray}
this is a special case of~\eqref{eq:C_Opt} with $L = 1, \beta = 0, \M$
contains all indices in $\t$, and choosing $\alpha$ properly to ensure
that~\eqref{eq:AD_Opt_cond} holds, see
Table~\ref{table:algorithms}. In Section~\ref{sec:AD_dicussion}, we
show that, under these assumptions, both~\eqref{eq:AD_Opt}
and~\eqref{eq:rAD_Opt} lead to perfect outlier detection.

\mypar{Solutions}
The minimization problem~\eqref{eq:rAD_Opt} is convex, and it is numerically efficient to solve for its optimal solution.  

We further formulate an unconstrained problem as follows:
  \begin{eqnarray}
  \label{eq:AD_Opt_unconstraint}
   	\ew &=& \arg \min_\e \,  \STV_{2}( \t - \e )  + \beta \left\|\e\right\|_1.
   \end{eqnarray}
We call~\eqref{eq:AD_Opt_unconstraint}~\emph{anomaly detection via $\ell_1$ regularization} (AD). In~\eqref{eq:AD_Opt_unconstraint}, we merge conditions~\eqref{eq:rAD_Opt_cond1} and~\eqref{eq:rAD_Opt_cond2} and  move them from the constraint to the objective function. We solve~\eqref{eq:AD_Opt_unconstraint} by using generalized gradient descent, as discussed in Section~\ref{sec:alternative_sol}. For implementation details, see Algorithm~\ref{alg:AD}.

\begin{algorithm}[t]
  \footnotesize
  \caption{\label{alg:AD} Anomaly detection via $\ell_1$ regularization}
  \begin{tabular}{@{}ll@{}}
    \addlinespace[1mm]
    {\bf Input}
    &$\t$~~~input graph signals \\
    {\bf Output}
    &$\ew$~~~outlier signals \\
    \addlinespace[2mm]
    {\bf Function} & {\bf AD$(\x)$ } \\
    & initialize  $\e$ \\
    & while the stopping criterion is not satisfied   \\
    & ~~~Choose step size $t$ from backtracking line search \\
    & ~~~$\e \leftarrow \Shrink_{t\beta} \left( \e - 2 t \widetilde{\Adj}  (\t -\e)  \right)$ \\
    & end\\
    & {\bf return} $\ew  \leftarrow \e$ \\
    \addlinespace[1mm]
  \end{tabular}
\end{algorithm}

\mypar{Theoretical analysis}
\label{sec:AD_dicussion}
Let $\x^0$ be the true graph signal, represented as $\x^0 = \Vm \a^0 = \sum_{i=0}^{N-1} {a_i^0 \vw_i}$, where $\Vm$ is defined in~\eqref{eq:eigendecomposition}, $\e^0$ be the outliers that we are trying to detect,  represented as $\e^0 = \sum_{i \in \Eset} {b_i \ImP_i}$, where $\ImP_i$ is impulse on the $i$th node, and $\Eset$ contains the outlier indices, that is, $\Eset \subset \{0, 1, 2, \cdots N-1\}  $, and $ \t = \x^0 + \e^0$ be the measurement.   
\begin{myLem}
\label{lem:s}
Let $\xw,\ew$ be the solution of ~\eqref{eq:AD_Opt}, and let $\xw = \Vm\aw = \sum_{i=0}^{N-1} { \widehat{a}_i \vw_i}$. Then, 
$$\ew = \Vm (\a^0 -\aw) + \sum_{i \in \Eset}   {b_i \ImP _i}. $$
\end{myLem}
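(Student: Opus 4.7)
The plan is to exploit the equality constraint directly and do essentially one line of algebra. Since $(\xw, \ew)$ is feasible for \eqref{eq:AD_Opt}, constraint \eqref{eq:AD_Opt_cond_2} forces $\t = \xw + \ew$, hence $\ew = \t - \xw$. On the other hand, by hypothesis $\t = \x^0 + \e^0$, so substituting gives $\ew = \x^0 - \xw + \e^0$.

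Next I would plug in the two given expansions. Writing $\x^0 = \Vm \a^0$ and $\xw = \Vm \aw$ in the graph Fourier basis yields $\x^0 - \xw = \Vm(\a^0 - \aw)$. Combining with $\e^0 = \sum_{i \in \Eset} b_i \ImP_i$ immediately produces
\begin{equation*}
\ew \;=\; \Vm(\a^0 - \aw) + \sum_{i \in \Eset} b_i \ImP_i,
\end{equation*}
which is the claimed identity.

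There is no real obstacle here; the statement is a bookkeeping identity that follows from (i) the feasibility constraint of the optimization problem and (ii) the two prescribed decompositions. The only thing worth flagging in the write-up is that the result does not use the objective $\|\e\|_0$ at all, nor the smoothness constraint $\STV_2(\x) \le \eta^2$ — it holds for any feasible pair and in particular for the optimum. This observation is what makes the lemma useful downstream: it rewrites the recovered outlier as the sum of a ``spectral mismatch'' term $\Vm(\a^0 - \aw)$ and the true outlier pattern, setting up a subsequent argument that the first term must vanish (or be controlled) under appropriate conditions on the graph spectrum and on $\Eset$.
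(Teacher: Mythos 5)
Your proof is correct and is essentially identical to the paper's: both use the feasibility constraint $\t = \xw + \ew$, substitute $\t = \x^0 + \e^0$, and then rewrite using the expansions $\x^0 = \Vm\a^0$, $\xw = \Vm\aw$, and $\e^0 = \sum_{i\in\Eset} b_i\ImP_i$. Your added remark that the lemma uses only feasibility, not optimality, is accurate and consistent with how the paper later deploys it.
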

\begin{proof}
\begin{displaymath}
  \ew \ \stackrel{(a)}{=}  \ \t - \xw
  \ \stackrel{(b)}{=}  \ \x^0 + \e^0 - \xw
  \ \stackrel{(c)}{=} \ \Vm \a^0 + \sum_{i \in \Eset} {b_i \ImP_i} - \Vm \aw,
\end{displaymath}
where $(a)$ follows from the feasibility of  $\xw,\ew$ in~\eqref{eq:AD_Opt_cond_2}; $(b)$ from the definition of $\t$; and $(c)$ from the definitions of  $\x^0$ and $\xw$.
\end{proof}
Lemma~\ref{lem:s} provides another representation of the outliers, which is useful in the following theorem.

Let $\K = (\Id - \Lambda)^T \Vm^T \Vm (\Id - \Lambda)$, the $\K$ norm as  $\left\|\x\right\|_\K = \sqrt{\x^T \K \x}$, and $\Kset_\eta = \{ \a \in \R^N: \left\|\a\right\|_{\K} \leq \eta,  {\rm for~all~} \a \neq 0 \}$.
\begin{myLem}
\label{lem:tv}
Let ~$\x =  \Vm \a$ satisfy~\eqref{eq:rAD_Opt_cond1},~\eqref{eq:rAD_Opt_cond2}, and $\a \neq 0 $. Then, $ \a \in \Kset_\eta$.
\end{myLem}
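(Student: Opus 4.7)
The plan is to simply unwind the definitions: compute $\STV_2(\x)$ when $\x = \Vm \a$, re-express it as a quadratic form in $\a$, and recognize that quadratic form as $\|\a\|_\K^2$. Constraint~\eqref{eq:rAD_Opt_cond1} then gives the desired bound directly.

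Concretely, I would first substitute $\x = \Vm \a$ into $\STV_2(\x) = \|(\Id - \Adj)\x\|_2^2$. Using the spectral decomposition $\Adj = \Vm \Lambda \Vm^{-1}$ from~\eqref{eq:eigendecomposition}, I get
\[
(\Id - \Adj)\Vm \;=\; \Vm - \Vm\Lambda \;=\; \Vm(\Id - \Lambda),
\]
so $(\Id - \Adj)\x = \Vm(\Id - \Lambda)\a$.

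Next I would expand the squared norm as an inner product:
\[
\STV_2(\x) \;=\; \a^{T}(\Id - \Lambda)^{T}\Vm^{T}\Vm(\Id - \Lambda)\a \;=\; \a^{T}\K\,\a \;=\; \|\a\|_{\K}^{2},
\]
by the definition of $\K$ and of the $\K$-norm. Finally, since $\x$ satisfies~\eqref{eq:rAD_Opt_cond1}, $\STV_2(\x) \leq \eta^2$, so $\|\a\|_{\K} \leq \eta$, and because $\a \neq 0$ by hypothesis, $\a \in \Kset_\eta$ as defined.

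There is essentially no obstacle here; this is a direct algebraic identification. The only minor subtlety is the use of $\Vm^T$ versus $\Vm^{-1}$ in the factorization, but since $\K$ is defined with $\Vm^T \Vm$ rather than with $\Vm^{-T}\Vm^{-1}$, the identity goes through without any orthogonality assumption on $\Vm$. The condition~\eqref{eq:rAD_Opt_cond2} plays no active role in the bound itself; it only guarantees that $\x$ is the ``signal part'' of the decomposition we care about, so it suffices to invoke~\eqref{eq:rAD_Opt_cond1} for the inequality.
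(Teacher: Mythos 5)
Your proposal is correct and follows essentially the same route as the paper's proof: substitute $\x = \Vm\a$, use $\Adj\Vm = \Vm\Lambda$ from the eigendecomposition to write $\STV_2(\x) = \a^T(\Id-\Lambda)^T\Vm^T\Vm(\Id-\Lambda)\a = \a^T\K\a$, and invoke the feasibility constraint to conclude $\left\|\a\right\|_\K \leq \eta$. Your side remark about $\Vm^T$ versus $\Vm^{-1}$ matches the paper's (implicit) treatment, so there is nothing further to reconcile.
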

\begin{proof}
\begin{eqnarray*}
  \STV_{2}(\x) & = & \left\| \x - \Adj \x \right\|_2^2
  \ {=}  \ \left\| \Vm \a -  \Adj \Vm \a \right\|_2^2,
  \\ \nonumber
  & \stackrel{(a)}{=}  & \left\| \Vm \a -  \Vm \Lambda \a \right\|_2^2
  \ {=}  \ \a^T (\Id - \Lambda)^T \Vm^T \Vm (\Id - \Lambda) \a,
  \\ \nonumber
  & \stackrel{(b)}{=}  & \a^T \K \a  \ \stackrel{(c)}{\leq} \  \eta^2,
\end{eqnarray*}
where $(a)$ follows from~\eqref{eq:eigendecomposition}; $(b)$ from the definition of the $\K$ norm; and $(c)$ from the feasibility of $\x$.
\end{proof}
Lemma~\ref{lem:tv} shows that the frequency components of the solution from~\eqref{eq:AD_Opt} and~\eqref{eq:rAD_Opt} belong to a subspace, which is useful in the following theorem.

\begin{myThm}
\label{thm:ad}
Let $ \STV_{2}(\x^0) \leq \eta^2$, $\x^0 \neq 0$, $ \left\|\e^0\right\|_0 \leq k $, and $\xw, \ew$ be the solution of~\eqref{eq:AD_Opt} with $\xw \neq 0$. Let $\Kset_{2\eta}$ have the following property: 
$$
\left\|\Vm\a\right\|_0 \geq 2k+1~~~~~{\rm for~all~} \a \in \Kset_{2\eta},
$$
where $k \geq \left\|\e^0\right\|_0$. Then,  perfect recovery is achieved,
$$
\xw  \ = \ \x^0, \ew  \ = \ \e^0.
$$
\end{myThm}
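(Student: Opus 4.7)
The plan is to derive a contradiction from the support condition by combining the feasibility of the ground truth with Lemmas~\ref{lem:s} and~\ref{lem:tv}. First, since the pair $(\x^0,\e^0)$ is feasible for~\eqref{eq:AD_Opt} (by the hypotheses $\STV_{2}(\x^0)\leq\eta^2$ and $\t=\x^0+\e^0$), optimality of $(\xw,\ew)$ in the $\ell_0$ objective immediately gives $\|\ew\|_0 \leq \|\e^0\|_0 \leq k$.

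Second, I would translate the difference $\ew - \e^0$ into a statement about the frequency representation. By Lemma~\ref{lem:s}, $\ew = \Vm(\a^0 - \aw) + \sum_{i\in\Eset} b_i \ImP_i = \Vm(\a^0-\aw) + \e^0$, so $\Vm(\a^0 - \aw) = \ew - \e^0$. Then the triangle inequality for the $\ell_0$ pseudo-norm yields
\begin{equation*}
\left\|\Vm(\a^0 - \aw)\right\|_0 \ = \ \left\|\ew - \e^0\right\|_0 \ \leq \ \|\ew\|_0 + \|\e^0\|_0 \ \leq \ 2k.
\end{equation*}

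Third, I would control $\a^0 - \aw$ in the $\K$ norm. Since $\x^0 = \Vm\a^0$ is feasible with $\a^0 \neq 0$ (as $\x^0 \neq 0$), Lemma~\ref{lem:tv} gives $\a^0 \in \Kset_\eta$, i.e.\ $\|\a^0\|_\K \leq \eta$. The optimal $\xw = \Vm \aw$ is feasible with $\aw \neq 0$ (since $\xw \neq 0$), so by the same lemma $\|\aw\|_\K \leq \eta$. Because $\K$ is positive semidefinite, $\|\cdot\|_\K$ is a seminorm and the triangle inequality gives $\|\a^0 - \aw\|_\K \leq \|\a^0\|_\K + \|\aw\|_\K \leq 2\eta$.

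Finally, I would close the argument by contradiction. Suppose $\a^0 - \aw \neq 0$; then $\a^0 - \aw \in \Kset_{2\eta}$ and the hypothesis on $\Kset_{2\eta}$ forces $\|\Vm(\a^0 - \aw)\|_0 \geq 2k+1$, contradicting the bound $\|\Vm(\a^0-\aw)\|_0 \leq 2k$ derived above. Hence $\a^0 = \aw$, which yields $\xw = \Vm\aw = \Vm\a^0 = \x^0$ and, from $\t = \x^0 + \e^0 = \xw + \ew$, also $\ew = \e^0$. The main delicate point is justifying the triangle inequality step for $\|\cdot\|_\K$ (which requires positive semidefiniteness of $\K = (\Id-\Lambda)^T \Vm^T \Vm (\Id-\Lambda)$, a product of the form $M^T M$) together with the non-vanishing of $\a^0 - \aw$ needed to invoke the condition on $\Kset_{2\eta}$; everything else is bookkeeping.
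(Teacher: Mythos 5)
Your proposal is correct and follows essentially the same route as the paper: feasibility of $(\x^0,\e^0)$ plus optimality bounds $\|\ew\|_0\leq k$, Lemma~\ref{lem:tv} and the triangle inequality for the $\K$-seminorm place $\a^0-\aw$ in $\Kset_{2\eta}$, and the support hypothesis then yields a contradiction via Lemma~\ref{lem:s}. The only (cosmetic) difference is that you bound $\|\Vm(\a^0-\aw)\|_0=\|\ew-\e^0\|_0\leq 2k$ from above, whereas the paper bounds $\|\ew\|_0\geq k+1$ from below by the cancellation argument --- these are the same $\ell_0$ subadditivity step read in opposite directions, and your version is, if anything, slightly cleaner.
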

\begin{proof}
Since both $ \x^0 = \Vm \a^0$ and $\xw = \Vm \aw$ are feasible solutions of ~\eqref{eq:AD_Opt}, by Lemma~\ref{lem:tv}, we then have that $\a^0, \aw \in \Kset_\eta$. We next bound their difference, $\a^0 - \aw $, by the triangle inequality, as  $\left\|\a^0 - \aw \right\|_{\K} \leq \left\|\a^0\right\|_{\K} + \left\| \aw \right\|_{\K} \leq 2 \eta$.

If $\a^0 \neq \aw$, then $\a^0-\aw \in \Kset_{2\eta}$, so that $\left\|\Vm(\a^0-\aw) \right\|_0 \geq 2k+1$. From Lemma~\ref{lem:s},  we have $\left\|\ew\right\|_0 = \left\|\Vm (\a^0-\aw) + \sum_{i \in \Eset} {b_i \e_i} \right\|_0 \geq k+1$. The last inequality comes from the fact that at most $k$ indices can be canceled by the summation.

On the other hand, $\ew$ is the optimum of~\eqref{eq:AD_Opt}, thus, $\left\|\ew\right\|_0 \leq \left\|\e^0\right\|_0 = k$, which leads to a contradiction.

Therefore, $\aw = \a^0$ and $\ew = \e^0$.
\end{proof}

Theorem~\ref{thm:ad} shows the condition that leads to perfect outlier detection  by following~\eqref{eq:AD_Opt}. The key is that the outliers are sufficiently sparse and the smooth graph signal is not sparse.

\begin{myThm}
\label{thm:gad}
Let $ \STV_{2}(\x^0) \leq \eta^2$, $\x^0 \neq 0$, $ \left\|\e^0\right\|_0 \leq k $, and $\xw, \ew$ be the solution of~\eqref{eq:rAD_Opt} with $\xw \neq 0$. Let $\Kset_{2\eta}$ have the following property: 
$$
\left\|\left( \Vm\a \right)_{\Eset^c}\right\|_1 > \left\|\left( \Vm \a \right)_{\Eset}\right\|_1~~~~~{\rm for~all~} \a \in  \Kset_{2\eta}
$$
where $\Eset^c \cap \Eset$ is the empty set, $\Eset^c \cup \Eset = \{0, 1, 2, \cdots N-1 \}$. Then,  perfect recovery is achieved,
$$
\xw  \ = \ \x^0, \ew  \ = \ \e^0.
$$
\end{myThm}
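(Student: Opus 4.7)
The plan is to mirror the contradiction argument of Theorem~\ref{thm:ad}, adapted from the $\ell_0$ setting to the $\ell_1$ setting: assume $\aw \neq \a^0$, show that this forces $\|\ew\|_1 > \|\e^0\|_1$, and contradict the optimality of $\ew$ in~\eqref{eq:rAD_Opt}.

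First I would observe that both $\x^0 = \Vm \a^0$ and $\xw = \Vm \aw$ are feasible for \eqref{eq:rAD_Opt}, so Lemma~\ref{lem:tv} places $\a^0, \aw \in \Kset_\eta$. By the triangle inequality for the $\K$-norm, $\|\a^0 - \aw\|_\K \leq 2\eta$; assuming for contradiction that $\a^0 \neq \aw$, the difference lies in $\Kset_{2\eta}$ and the hypothesis of the theorem applies to $\a = \a^0 - \aw$.

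Next I would apply Lemma~\ref{lem:s} to write $\ew = \Vm(\a^0 - \aw) + \e^0$, where $\e^0$ is supported on $\Eset$. Splitting the $\ell_1$ norm along the partition $\Eset \cup \Eset^c$ gives
$$
\|\ew\|_1 \ = \ \|(\Vm(\a^0-\aw))_{\Eset^c}\|_1 + \|(\Vm(\a^0-\aw))_\Eset + \e^0_\Eset\|_1.
$$
Applying the reverse triangle inequality to the second summand yields $\|\ew\|_1 \geq \|(\Vm(\a^0-\aw))_{\Eset^c}\|_1 - \|(\Vm(\a^0-\aw))_\Eset\|_1 + \|\e^0\|_1$, and the hypothesis makes the first two terms combine strictly positively, so $\|\ew\|_1 > \|\e^0\|_1$. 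Since $(\x^0,\e^0)$ is feasible for~\eqref{eq:rAD_Opt}, this contradicts the optimality of $(\xw,\ew)$. Hence $\aw = \a^0$, which gives $\xw = \x^0$, and $\ew = \t - \xw = \e^0$ follows from~\eqref{eq:rAD_Opt_cond2}.

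The main subtlety is the sign bookkeeping in the reverse triangle step: one has to absorb the $\Eset$-error into $\|\e^0\|_1$ with the correct sign while leaving the $\Eset^c$-error as a positive surplus, so that the strict inequality supplied by the hypothesis transfers intact to a strict inequality between $\|\ew\|_1$ and $\|\e^0\|_1$. Once that bookkeeping is in place, the rest of the argument is essentially forced by the feasibility of $(\x^0,\e^0)$ and the definition of $(\xw,\ew)$ as a minimizer.
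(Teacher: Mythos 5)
Your proposal is correct and follows essentially the same contradiction argument as the paper: decompose $\ew = \Vm(\a^0-\aw) + \e^0$ via Lemma~\ref{lem:s}, split the $\ell_1$ norm over $\Eset$ and $\Eset^c$, and use the hypothesis on $\Kset_{2\eta}$ to force $\left\|\ew\right\|_1 > \left\|\e^0\right\|_1$, contradicting optimality. The only cosmetic difference is that you invoke the reverse triangle inequality on the whole $\Eset$ block where the paper writes out the equivalent coordinate-wise bound $|d_i| + |d_i+b_i| \geq |b_i|$.
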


\begin{proof}
From Lemma~\ref{lem:s}, we have
\begin{eqnarray}
\left\|\ew\right\|_1 & = & \left\|\Vm(\a^0-\aw) + \sum_{i \in \Eset} {b_i \ImP_i} \right\|_1
\nonumber \\ \nonumber
& = & \left\| \left( \Vm(\a^0-\aw) \right)_{\Eset^c} + \left( \Vm (\a^0-\aw) \right)_{\Eset} + \sum_{i \in \Eset} {b_i \ImP_i} \right\|_1
\\ \nonumber
& = & \left\|\left( \Vm (\a^0-\aw) \right)_{\Eset^c} \right\|_1 + \left\| \left( \Vm (\a^0-\aw) \right)_{\Eset} + \sum_{i \in \Eset} {b_i \ImP_i} \right\|_1
\end{eqnarray}

Denote $\left( \Vm  (\a^0-\aw) \right) _{\Eset}  =  \sum_{i \in \Eset} {d_i \ImP_i} $, we further have
\begin{eqnarray}
\left\|\ew\right\|_1 & = & \left\|\left(  \Vm (\a^0-\aw) \right)_{\Eset^c} \right\|_1 + \left\|\sum_{i \in \Eset} {(d_i+b_i) \ImP_i} \right\|_1
\nonumber \\ \nonumber
& = & \left\|\left( \Vm  (\a^0-\aw) \right)_{\Eset^c} \right\|_1 + \sum_{i \in \Eset} |d_i+b_i|
\end{eqnarray}
If $\a^0 \neq \aw$, then $\a^0-\aw \in \Kset_{2\eta}$.  By the
assumption, we have $ \left\| \left( \Vm (\a^0-\aw) \right)_{\Eset^c}
\right\|_1 > \left\| \left( \Vm (\a^0-\aw) \right)_{\Eset} \right\|_1
= \left\|\sum_{i \in \Eset} {d_i \ImP_i} \right\|_1 = \sum_{i \in
  \Eset} |d_i|$. We thus obtain
\begin{eqnarray*}
  \left\|\ew\right\|_1 & = & \left\|\left( \Vm (\a^0-\aw) \right)_{\Eset^c} \right\|_1 + \sum_{i \in \Eset} |d_i+b_i|
  \nonumber \\ \nonumber
  & > & \sum_{i \in \Eset} ( |d_i| + |d_i+b_i| )
  \ \geq \ \sum_{i \in \Eset} |b_i|.
\end{eqnarray*}
On the other hand, $\ew$ is the optimum of~\eqref{eq:rAD_Opt}, so $\left\|\ew\right\|_1 \leq \left\|\e^0\right\|_1 = \left\|\sum_{i \in \Eset} {b_i \ImP_i} \right\|_1 = \sum_{i \in \Eset} |b_i|$, which leads to a contradiction.

Therefore, $\aw = \a^0$ and $\ew = \e^0$.
\end{proof}
Theorems~\ref{thm:ad} and~\ref{thm:gad} show that under appropriate assumptions,~\eqref{eq:AD_Opt}, ~\eqref{eq:rAD_Opt} detects the outliers perfectly. Note that the assumptions on $\Kset$ in Theorems~\ref{thm:ad} and~\ref{thm:gad} are related to two factors: the upper bound on smoothness, $\eta$, and the eigenvector matrix, $\Vm$. The volume of $\Kset_{2\eta}$ is determined by the upper bound on smoothness, $\eta$. The mapping properties of $\Vm$ are also restricted by Theorems~\ref{thm:ad} and~\ref{thm:gad}. For instance, in Theorem~\ref{thm:ad}, the eigenvector matrix should map each element in $\Kset_{2\eta}$ to a non-sparse vector.

\mypar{Robust graph signal inpainting}
One problem with the graph signal inpainting in Section~\ref{sec:gsi} is that it tends to trust the accessible part, which may contain sparse, but large-magnitude outliers. Robust graph signal inpainting should prevent the solution from being influenced by the outliers. We thus consider the following optimization problem:
  \begin{eqnarray}
  \label{eq:RGSI_Opt}
   	\xw, \ww, \ew &=& \arg \min_{\x, \w, \e}\ \alpha \STV_2(\x)  + \gamma \left\|\e\right\|_0,
   	\\ 
   \label{eq:RGSI_Opt_cond1}
	\text{subject to}
	&& \left\|\w \right\|_F^2 < \eta^2
	  \\ 
	  \label{eq:RGSI_Opt_cond2}
	&& \t_\M = (\x + \w + \e )_\M;
   \end{eqnarray}
   this is a special case of~\eqref{eq:Opt} with $L = 1, \beta = 0$;
   see Table~\ref{table:algorithms}.

   Similarly to~\eqref{eq:C_Opt}, instead of dealing with the $\ell_0$
   norm, we relax it to be the $\ell_1$ norm and
   reformulate~\eqref{eq:RGSI_Opt} as an unconstrained problem,
  \begin{eqnarray}
  \label{eq:RRGSI_Opt_unconstraint}
   	\xw, \ew  \ =\ \arg \min_{\x, \e}\,  \left\| \t_{\M} - (\x+\e)_{\M} \right\|^2 + \alpha \STV_{2}(\x)  + \gamma \left\|\e\right\|_1.
   \end{eqnarray}
We call problem~\eqref{eq:RRGSI_Opt_unconstraint} the~\emph{robust graph total variation regularization} (RGTVR) problem. In~\eqref{eq:RRGSI_Opt_unconstraint}, we merge conditions~\eqref{eq:RGSI_Opt_cond1} and~\eqref{eq:RGSI_Opt_cond2} and move them from the constraint to the objective function. Note that~\eqref{eq:RRGSI_Opt_unconstraint} combines anomaly detection and graph signal inpainting to provide a twofold inpainting. The first level detects the outliers in the accessible part and provides a clean version of the accessible measurement; the second level uses the clean measurement  to recover the inaccessible part.  We solve~\eqref{eq:RRGSI_Opt_unconstraint} by using ADMM, as discussed in Section~\ref{sec:gsr}. For implementation details, see Algorithm~\ref{alg:RGTVR}.

\begin{algorithm}[t]
  \footnotesize
  \caption{\label{alg:RGTVR} Robust Graph Total Variation Regularization}
  \begin{tabular}{@{}ll@{}}
    \addlinespace[1mm]
    {\bf Input}
    &  $\t$~~~input graph signal \\
    {\bf Output}
    &  $\ew$~~~outlier graph signal \\
    &  $\xw$~~~output graph signal  \\
    \addlinespace[2mm]
    {\bf Function} & {\bf RGTVR($\t$)} \\
    &while the stopping criterion is not satisfied  \\
    & ~~$\x \leftarrow \left(\Id+2\alpha\eta^{-1} \widetilde{\Adj}  \right)^{-1}(\t-\e-\w-\c- \eta^{-1}\lambda)$  \\
    &~~$\w \leftarrow \eta{(\t-\x-\w-\c- \eta^{-1}\lambda)} / (\eta+2)$  \\
    &~~$\e \leftarrow \Shrink_{\gamma\eta^{-1}}{(\t-\x-\e-\c- \eta^{-1}\lambda)}$  \\
    &~~$\lambda \leftarrow \lambda - \eta (\t-\x-\e-\w-\c)$  \\
    &~~$\c_\M \leftarrow 0, \c_\U \leftarrow (\t-\x-\w-\e- \eta^{-1}{\lambda})_\U$, \\
    &end \\
    &{\bf return} $\ew \leftarrow \e,  ~~\xw \leftarrow \x$ \\
    \addlinespace[1mm]
  \end{tabular}
\end{algorithm}

\section{Experimental Results}
\label{sec:results}
We now evaluate the proposed methods on several real-world recovery
problems. Further, we apply graph signal inpainting and robust graph signal
inpainting to online blog classification and bridge condition
identification for indirect bridge structural health monitoring; We
apply graph signal matrix completion to temperature estimation and
expert opinion combination.

\mypar{Datasets}
We use the following datasets in the experiments:

\setcounter{subsubsection}{0}
\subsubsection{Online blogs}
\label{sec:blogs }
We consider the problem of classifying $N=1224$ online political blogs
as either conservative or liberal~\cite{AdamicG:05}. We represent
conservative labels as $+1$ and liberal ones as $-1$.  The blogs are
represented by a graph in which nodes represent blogs, and directed
graph edges correspond to hyperlink references between blogs. For a
node $v_n$, its outgoing edges have weights $1/\deg(v_n)$, where
$\deg(v_n)$ is the out-degree of $v_n$ (the number of outgoing edges). The graph signal here is  the label assigned to the blogs.

\subsubsection{Acceleration signals}
\label{sec:accelerationsignals}
We next consider the bridge condition identification
problem~\cite{CerdaGBRBZCM:12,CerdaCBGRK:12}. To validate the
feasibility of indirect bridge structural health monitoring, a
lab-scale bridge-vehicle dynamic system was built. Accelerometers were
installed on a vehicle that travels across the bridge; acceleration
signals were then collected from those accelerometers. To simulate the severity of
different bridge conditions on a lab-scale bridge, masses with various
weights were put on the bridge. We collected 30 acceleration signals
for each of 31 mass levels from 0 to 150 grams in steps of 5 grams, to
simulate different degrees of damages, for a total of 930
acceleration signals. For more details on this dataset,
see~\cite{LedermanWBNGCKCR:14}.

The recordings are represented by an $8$-nearest neighbor graph, in
which nodes represent recordings, and each node is connected to eight
other nodes that represent the most similar recordings. The graph
signal here is the mass level over all the acceleration signals. The
graph shift $\Adj$ is formed as $\Adj_{i,j} =
\Pj_{i,j}/\sum_i{\Pj_{i,j}}$, with
\begin{displaymath}
  \Pj_{i,j} \ = \ \exp \left( \frac{-N^2\left\|\f_i - \f_j \right\|_2} {\sum_{i,j} \left\|\f_i - \f_j \right\|_2} \right),
\end{displaymath}
and $\f_i$ is a vector representation of the features of the $i$th
recording. Note that $\Pj$ is a symmetric matrix that represents an
undirected graph and the graph shift $\Adj$ is an asymmetric matrix that
represents a directed graph, which is allowed by the framework
of~\DSPG. From the empirical performance, we find that a
directed graph provides much better results than an undirected graph.

\subsubsection{Temperature data}
\label{sec:temperature}
We consider 150 weather stations in the United States that record
their local temperatures~\cite{SandryhailaM:13}. Each weather station has 365 days of
recordings (one recording per day), for a total of 54,750 measurements.  The graph
representing these weather stations is obtained by measuring the
geodesic distance between each pair of weather stations. The nodes are
represented by an $8$-nearest neighbor graph, in which nodes represent
weather stations, and each node is connected to eight other nodes that
represent the eight closest weather stations. The graph signals here
are the temperature values recorded in each weather station.

The graph shift $\Adj$ is formed as $\Adj_{i,j} = \Pj_{i,j}/\sum_i{\Pj_{i,j}}$, with
\begin{displaymath}
  \Pj_{i,j} = \exp \left( -\frac{N^2 d_{i,j}}{ \sum_{i,j}{d_{i,j}} } \right),
\end{displaymath}
where $d_{i,j}$ is the geodesic distance between the $i$th and the
$j$th weather stations. Similarly to the acceleration signals, we normalize
$\Pj$ to obtain a asymmetric graph shift, which represents a directed
graph, to achieve better empirical performance.

\subsubsection{Jester dataset 1}
\label{sec:jester}
The Jester joke data set~\cite{Jester} contains $4.1 \times 10^6$
ratings of 100 jokes from 73,421 users. The graph representing the
users is obtained by measuring the $\ell_1$ norm of
existing ratings between each pair of jokes. The nodes are represented
by an 8-nearest neighbor graph in which nodes represent users and each
node is connected to eight other nodes that represent similar
users. The graph signals are the ratings of each user. The graph shift
$\Adj$ is formed as
\begin{displaymath}
  \Pj_{i,j} \ = \ \exp \left( \frac{-N^2\left\|\f_i - \f_j \right\|_1} {\sum_{i,j} \left\|\f_i - \f_j \right\|_1 } \right),
\end{displaymath}
where $\f_{i}$ is the vector representation of the existing ratings
for the $i$th user. Similarly to acceleration signals, we normalize
$\Pj$ to obtain an asymmetric graph shift, which represents a directed
graph, to achieve better empirical performance.

\mypar{Evaluation score}
To evaluate the performance of the algorithms, we use the following
four metrics: accuracy (ACC), mean square error (MSE), root mean
square error (RMSE), and mean absolute error (MAE), defined as
\begin{eqnarray*}
  {\rm ACC} & \ = \ & \frac{1}{N}  \sum_{i=1}^{N} {\bf 1}{( x_i = \widehat{x}_i )},\\
  {\rm MSE} & \ = \ & \frac{1}{N} \sum_{i=1}^{N} {(x_i - \widehat{x}_i)^2},\\
  {\rm RMSE} & \ = \ & \sqrt{  \frac{1}{N} \sum_{i=1}^{N} {(x_i - \widehat{x}_i)^2}  } = \sqrt{{\rm MSE} },\\
  {\rm MAE} & \ = \ & \frac{\sum_{i=1}^{N} {|x_i - \widehat{x}_i|}}{N},
\end{eqnarray*}
where $x_i $ is the ground-truth for the $i$th sample, $\widehat{x}_i $ is
the estimate for the $i$th sample, and ${\bf 1}$ is the indicator
function, ${\bf 1}(x) = 1$, for $x=0$, and $0$ otherwise.





In the following applications, the tuning parameters for each algorithm are chosen by cross-validation; that is, we split the accessible part into a training part and a validation part. We train the model with the training part and choose the tuning parameter that provides the best performance in the validation~part.

\mypar{Applications of graph signal inpainting}
Parts of this subsection have appeared in~\cite{ChenSLWMRBGK:14}; we
include them here for completeness. We apply the proposed graph signal
inpainting algorithm to online blog classification and bridge
condition identification. We compare the proposed GTVR~\eqref{eq:GTVR}
with another regression model based on graphs, graph Laplacian
regularization regression
(LapR)~\cite{ZhuGL:03,ZhouS:04,BelkinNS:06}. As described in
Section~\ref{sec:intro}, the main difference between LapR and GTVR is
that a graph Laplacian matrix in LapR is restricted to be symmetric
and only represents an undirected graph; a graph shift in GTVR can be
either symmetric or asymmetric.

\setcounter{subsubsection}{0}
\subsubsection{Online blog classification}
We consider a semi-supervised classification problem, that is,
classification with few labeled data and a large amount of unlabeled
data~\cite{Zhu:05}. The task is to classify the unlabeled blogs. We
adopt the dataset of blogs as described in Section~\ref{sec:blogs
}. We randomly labeled 0.5\%, 1\%, 2\%, 5\%, and 10\% of blogs, called the~\emph{labeling ratio}. We then applied the graph signal inpainting algorithms to estimate the labels for the remaining blogs. Estimated labels were thresholded at zero, so that positive values were set to $+1$ and negative to $-1$.

Classification accuracies of GTVR and LapR were then averaged over 30
tests for each labeling ratio and are shown in
Figure~\ref{fig:blog_acc}. We see that GTVR achieves significantly
higher accuracy than LapR for low labeling ratios. The failure of LapR
at low labeling ratios is because an undirected graph fails to reveal
the true structure.

Figure~\ref{fig:blog_acc} also shows that the performance of GTVR saturates at around $95\%$. Many of the remaining errors are misclassification of blogs with many connections to the blogs from a different class, which violates the smoothness assumption underlying GTVR. Because of the same reason, the performance of a data-adaptive graph filter also saturates at around $95\%$~\cite{ChenSMK:13}. To improve on this performance may require using a more sophisticated classifier that we will pursue in future work.

\begin{figure}[t]
  \begin{center}
    \begin{tabular}{c} 
      \includegraphics[width= 0.6\columnwidth]{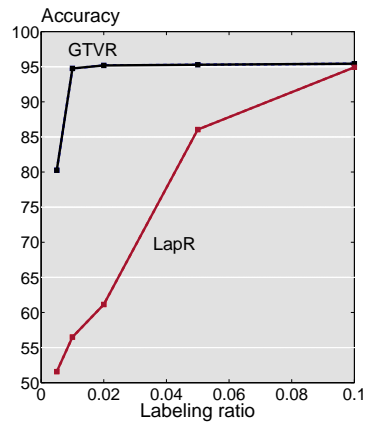}
    \end{tabular}
  \end{center}
  \vspace{-0.1in}
  \caption{\label{fig:blog_acc}   Accuracy comparison of online blog classification as a function of labeling ratio.}
\end{figure}

\subsubsection{Bridge condition identification}
We consider a semi-supervised regression problem, that is, regression
with few labeled data and a large amount of unlabeled
data~\cite{Zhu:05}. The task is to predict the mass levels of
unlabeled acceleration signals. We adopt the dataset of acceleration
signals as described in Section~\ref{sec:accelerationsignals}.  We
randomly assigned known masses to 0.5\%, 1\%, 2\%, 5\%, and 10\% of
acceleration signals and applied the graph signal inpainting
algorithms to estimate the masses for remaining nodes.

Figure~\ref{fig:bridge_mse} shows MSEs for estimated masses averaged
over 30 tests for each labeling ratio. The proposed GTVR approach
yields significantly smaller errors than LapR for low labeling
ratios. Similarly to the conclusion of online blog classification, a
direct graph adopted in GTVR reveals a better structure.

We observe that the performance of GTVR saturates at 3 in terms of MSE. This may be the result of how we obtain the graph. Here we construct the graph by using features from principal component analysis of the data. Since the data is collected with a real lab-scale model, which is complex and noisy, the principal component analysis may not extract all the useful information from the data, limiting the performance of the proposed method even with larger number of samples.

\begin{figure}[t]
  \begin{center}
    \begin{tabular}{c} 
      \includegraphics[width= 0.6\columnwidth]{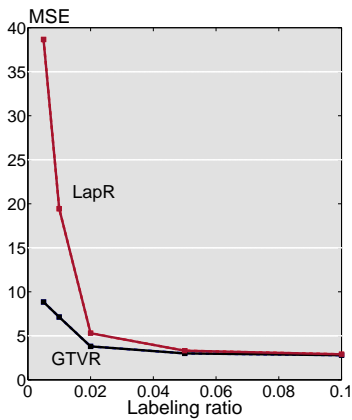}
    \end{tabular}
  \end{center}
  \vspace{-0.1in}
  \caption{\label{fig:bridge_mse}  MSE comparison for the bridge condition identification as a function of labeling ratio.}
\end{figure}

\mypar{Applications of graph signal matrix completion}
We now apply the proposed algorithm to temperature estimation,
recommender systems and expert opinion combination of online blog
classification. We compare the proposed GMCR~\eqref{eq:GMCR} with matrix completion algorithms. Those algorithms include
SoftImpute~\cite{MazumderHT:10}, OptSpace~\cite{KeshavanMO:10},
singular value thresholding (SVT)~\cite{CaiCS:10}, weighted
non-negative matrix factorization (NMF)~\cite{ZhangWM:06}, and
graph-based weighted nonnegative matrix factorization
(GWNMF)~\cite{GuZD:10}. Similarly to the matrix completion algorithm
described in Section~\ref{sec:matrixcompletion}, SoftImpute, OptSpace,
and SVT minimize the rank of a matrix in similar, but different
ways. NMF is based on matrix factorization, assuming that a matrix
can be factorized into two nonnegative, low-dimensional matrices;
GWNMF extends NMF by further constructing graphs on columns or rows to
represent the internal information. In contrast to the proposed
graph-based methods, GWNMF considers the graph structure in the hidden
layer. For a fair comparison, we use the same graph structure for GWNMF
and GMCM. NMF and GWNMF solve non-convex problems and get local
minimum.

\setcounter{subsubsection}{0}
\subsubsection{Temperature estimation} 
We consider matrix completion, that is, estimation of the
missing entries in a data matrix~\cite{CandesR:09}. The task is to
predict missing temperature values in an incomplete temperature data
matrix where each column corresponds to the temperature values of all
the weather stations from each day. We adopt the dataset of
temperature data described in Section~\ref{sec:temperature}~\cite{SandryhailaM:13}. In each day of temperature recording, we randomly hide $50\%, 60\%, 70\%,
80\%, 90\%$ measurements and apply the proposed matrix completion
methods to estimate the missing measurements. To further test the
recovery algorithms with different amount of data,
we randomly pick 50 out of 365 days of recording and conduct the same
experiment for 10 times. In this case, we have a graph signal matrix with $N = 150$, and $L = 50$, or $L = 365$.

Figures~\ref{fig:temperature_RMSE} and \ref{fig:temperature_MAE} show
RMSEs and MAEs for estimated temperature values averaged over 10
tests for each labeling ratio. We see that GTVM, as a pure graph-based method~\eqref{eq:GTVM}, performs well when the labeling ratio is low. When the labeling ratio increases, the performance of GTVM does not improve as much as the matrix completion algorithms, because it cannot learn from the graph signals. For both evaluation scores, RMSE and MAE, GMCR outperforms all matrix completion algorithms because it combines the prior information on graph structure with the low-rank assumption to perform a twofold learning scheme.

\begin{figure}[t]
  \begin{center}
    \begin{tabular}{cc}  
      \includegraphics[width=0.5\columnwidth]{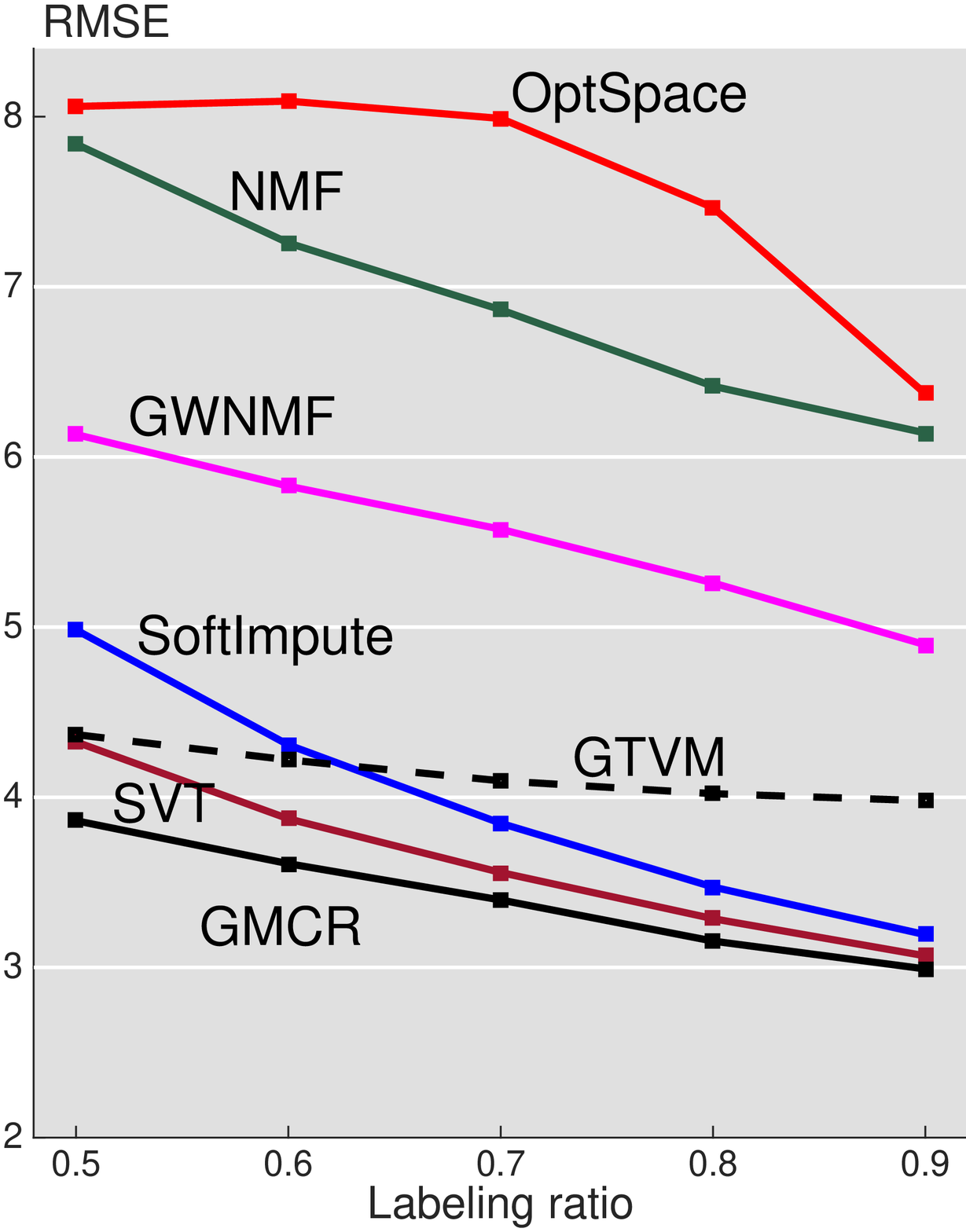} 
      & 
      \includegraphics[width=0.5\columnwidth]{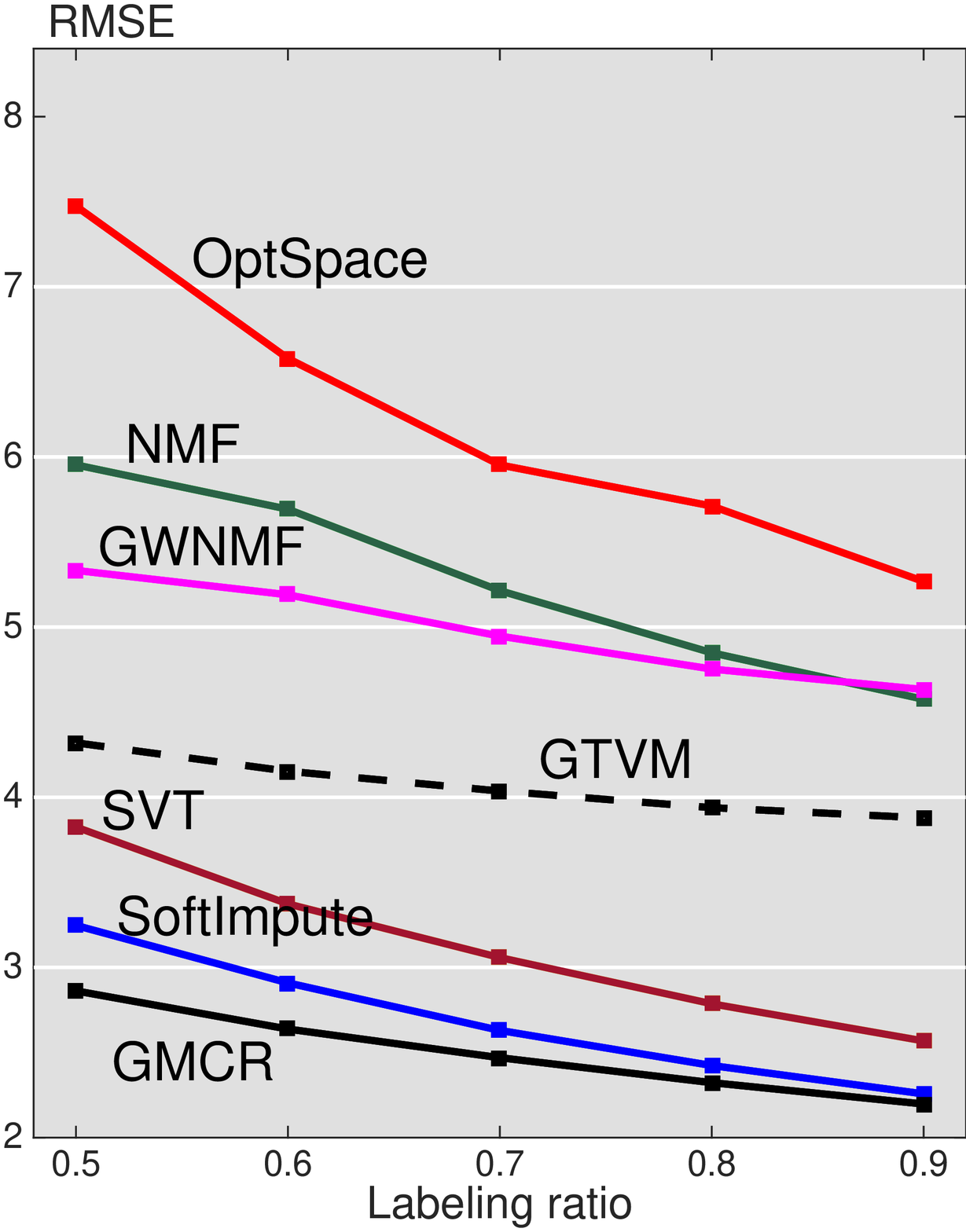}\\
      {\small (a) 50 recordings.} & {\small (b) 365 recordings.} 
    \end{tabular}
  \end{center}
  \vspace{-0.1in}
  \caption{\label{fig:temperature_RMSE} RMSE of temperature estimation for 50 recordings and 365 recordings.}
\end{figure}

\begin{figure}[t]
  \begin{center}
    \begin{tabular}{cc}  
      \includegraphics[width=0.48\columnwidth]{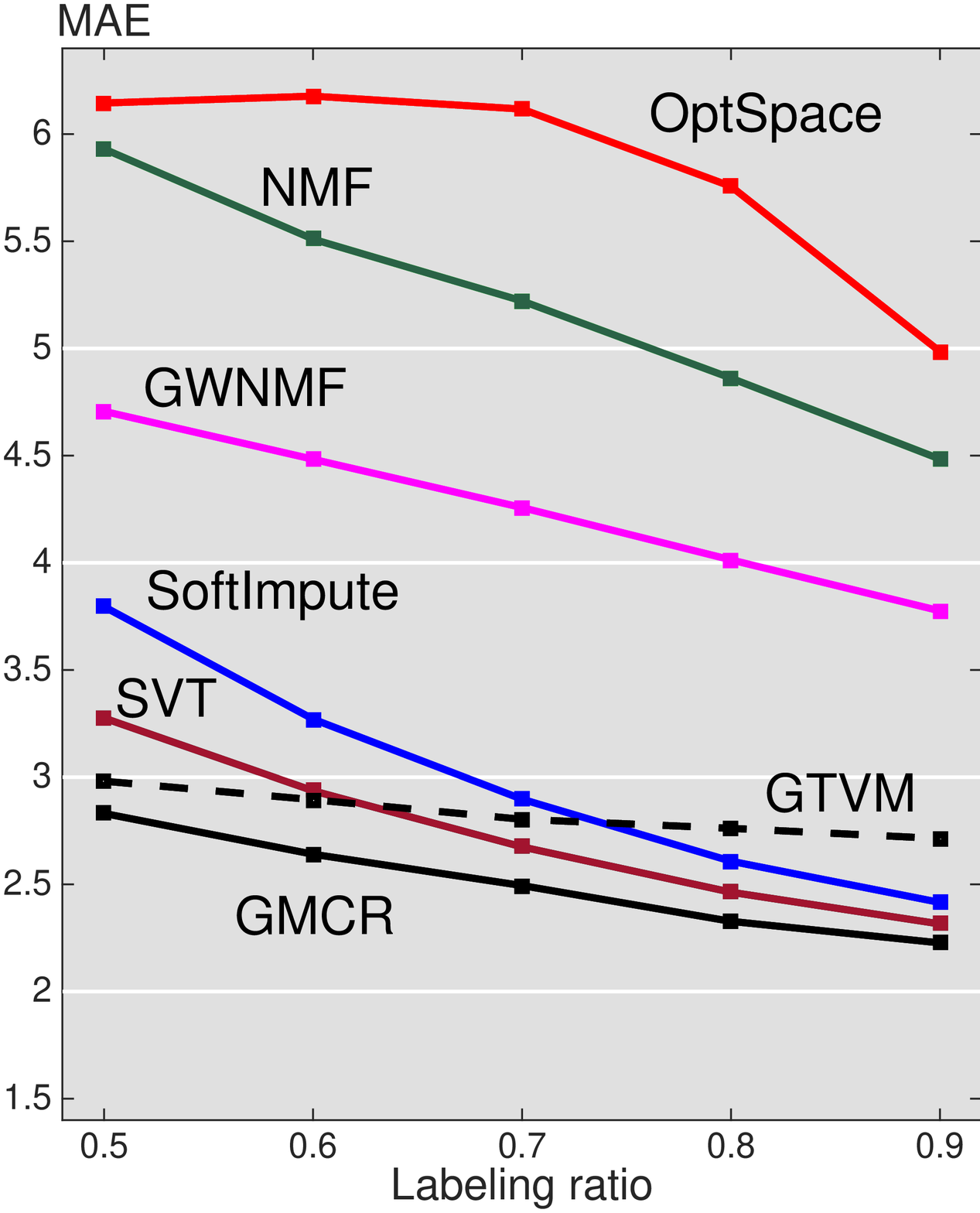} 
      & 
      \includegraphics[width=0.48\columnwidth]{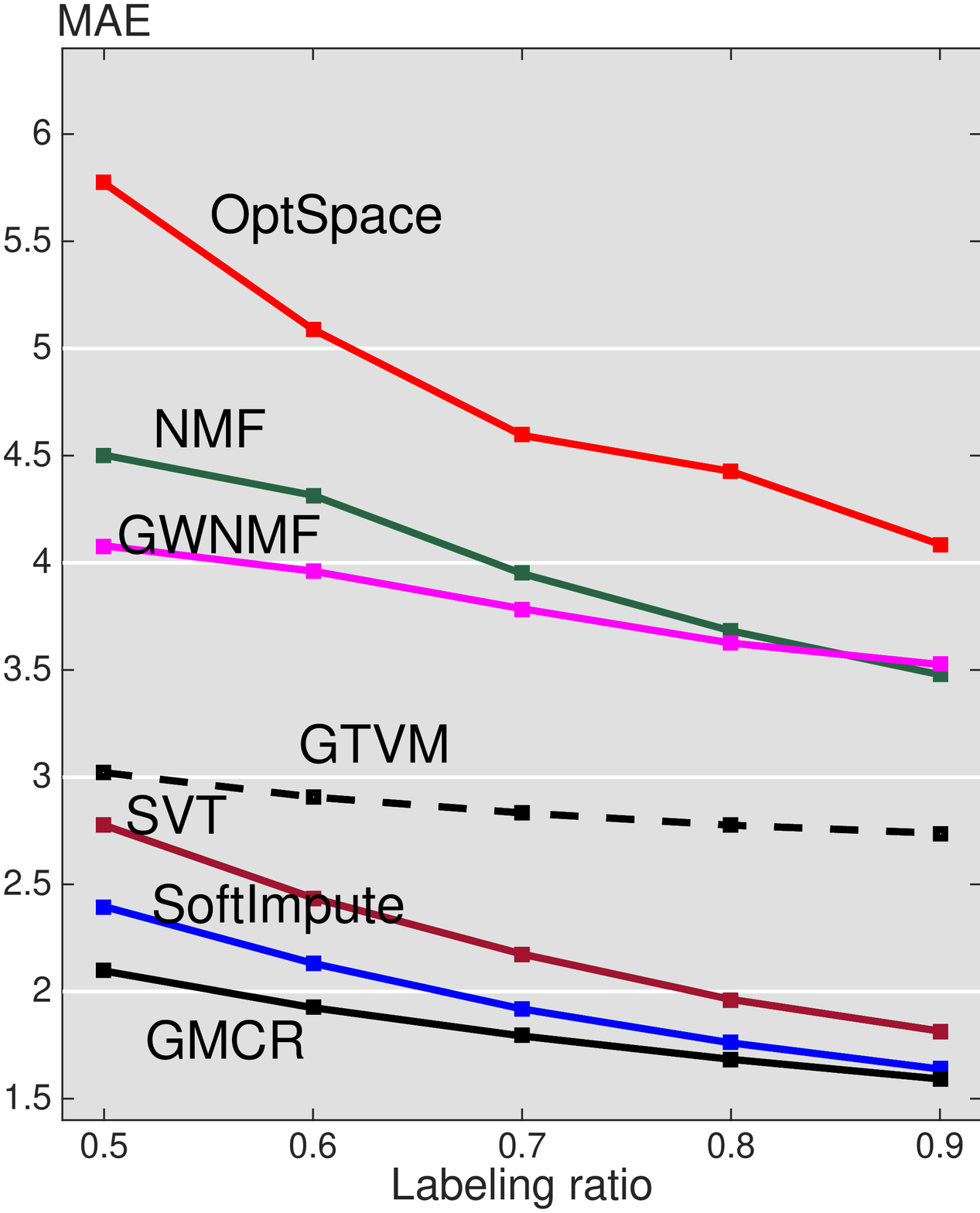}\\
      {\small (a) 50 recordings.} & {\small (b) 365 recordings.} 
    \end{tabular}
  \end{center}
  \vspace{-0.1in}
  \caption{\label{fig:temperature_MAE}  MAE of temperature estimation for 50 recordings and 365 recordings.}
\end{figure}

\subsubsection{Rating completion for recommender system}
We consider another matrix completion problem in the context of
recommender systems based on the Jester dataset 1~\cite{Jester}. The task is to predict
missing ratings in an incomplete user-joke rating matrix where each
column corresponds to the ratings of all the jokes from each
user. Since the number of users is large compared to the number of
jokes, following the protocol in~\cite{KeshavanMO:09}, we randomly
select 500 users for comparison purposes. For each user, we extract
two ratings at random as test data for 10 tests. In this case, we have a graph signal matrix with $N = 100$, and $L = 500$.

Figures~\ref{fig:jester_rmse} and \ref{fig:jester_mae} show RMSEs and
MAEs, defined in the evaluation score section, for estimated temperature values averaged over 10 tests.  We see
that graph-based methods (GWNMF and GMCR) take the advantage of
exploiting the internal information of users and achieve smaller
error. For RMSE, GMCR provides the best performance; for MAE, GWNMF
provides the best performance.

\begin{figure}[t]
  \begin{center}
    \begin{tabular}{c}  
      \includegraphics[width=0.9\columnwidth]{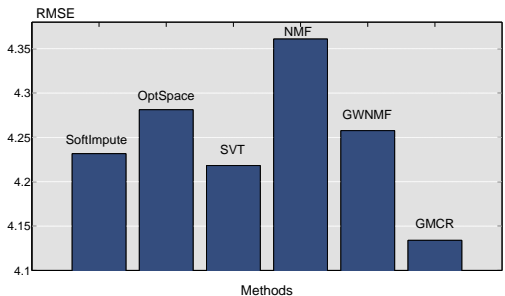} 
\\
    \end{tabular}
  \end{center}
  \vspace{-0.1in}
  \caption{\label{fig:jester_rmse}  RMSE of the rating completion in Jester 1 dataset.}
\end{figure}

\begin{figure}[t]
  \begin{center}
    \begin{tabular}{c}  
      \includegraphics[width=0.9\columnwidth]{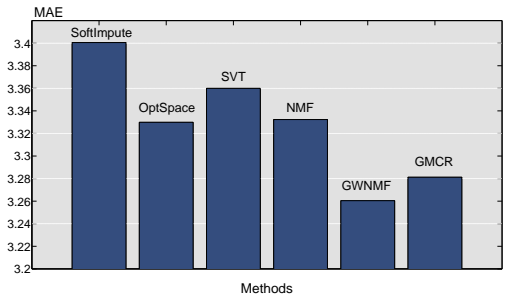} 
\\
    \end{tabular}
  \end{center}
  \vspace{-0.1in}
  \caption{\label{fig:jester_mae}  MAE of the rating completion in Jester~1 dataset.}
\end{figure}

\subsubsection{Combining expert opinions}
In many real-world classification problems, the opinion of experts determines the ground truth. At times, these are hard to obtain; for instance, when a dataset is too large, obtaining the opinion of experts is
too expensive, or experts differ among themselves, which
happens, for example, in biomedical image
classification~\cite{KuruvillaSHK:13}.  In this case, a popular
solution is to use multiple experts, or classifiers, to label
dataset elements and then combine their opinions into the final
estimate of the ground truth~\cite{CholletiGBPDSP:09}.  As we
demonstrate here, opinion combining can be formulated and solved as
graph signal matrix denoising.

We consider the online blog classification problem.  We hide the
ground truth and simulate $K=100$ experts labeling $1224$ blogs.
Each expert labels each blog as conservative ($+1$) or liberal ($-1$)
to produce an opinion vector $\t_k \in\{+1,-1\}^{1224}$. Note that
labeling mistakes are possible. We combine opinions from all the
experts and form an opinion matrix $\T \in\{+1,-1\}^{1224 \times
  100}$, whose $k$th column is $\t_k$.  We think of $\T$ as a graph
signal matrix with noise that represents the experts' errors. We
assume some blogs are harder to classify than others (for instance,
the content in a blog is ambiguous, which is hard to
label), we split the dataset of all the blogs into ``easy'' and
``hard'' blogs and assume that there is a $90\%$ chance that an expert
classifies an ``easy'' blog correctly and only a $30\%$ chance that an
expert classifies a ``hard'' blog correctly.  We consider four cases
of ``easy'' blogs making up $55\%$, $65\%$, $75\%$, and $85\%$ of the
entire dataset.

A baseline solution is to average (AVG) all the experts opinions into
vector $\t_{\rm avg} = (\sum_k \t_k)/K$ and then use the sign
$\rm{sign}(\t_{\rm avg})$ vector as the labels to blogs.  We compare
the baseline solution with the GTVR solution~\eqref{eq:solution} and GMCR. In GTVR, we first denoise every signal $\t_k$ and then compute the
average of denoised signals $\widetilde{\t}_{\rm avg}=(\sum_k
\widetilde{\t}_k)/K$ and use ${\rm sign}(\widetilde{\t}_{\rm avg})$ as labels to blogs.

Using the proposed methods, we obtain a denoised opinion matrix. We
average the opinions from all the experts into a vector and use its
signs as the labels to blogs. Note that, for GTVR and GMCR, the
accessible part is all the indices in the opinion matrix $\T$; since
each entry in $\T$ can be wrong, no ground-truth is available for
cross-validation. We vary the tuning parameter and report the best
results. Figure~\ref{fig:blog_expert} shows the accuracy of estimating
the ground-truth. We see that, through promoting the smoothness in each
column, GTVR improves the accuracy; GMCR provides the best results
because of its twofold learning scheme. Note that Figure~\ref{fig:blog_expert}  does not show that the common matrix
completion algorithms provide the same ``denoised'' results as the
baseline solution.

\begin{figure}[t]
  \begin{center}
    \begin{tabular}{c} 
      \includegraphics[width= 0.48\columnwidth]{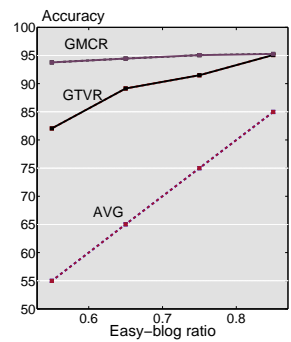}
    \end{tabular}
  \end{center}
  \vspace{-0.1in}
  \caption{\label{fig:blog_expert}   Accuracy of combining expert opinions.}
\end{figure}

\mypar{Applications of robust graph signal inpainting}
We now apply the proposed robust graph signal inpainting algorithm to
online blog classification and  bridge condition
identification. In contrast to what is done in the applications of graph signal inpainting, we manually add some outliers to the accessible part and compare the algorithm to
common graph signal inpainting algorithms.

\setcounter{subsubsection}{0}
\subsubsection{Online blog classification} 
We consider semi-supervised online blog classification as described in
Section~\ref{sec:blogs }. To validate the robustness of detecting
outliers, we randomly mislabel a fraction of the labeled blogs, feed them
into the classifiers together with correctly labeled signals, and
compare the fault tolerances of the
algorithms. Figure~\ref{fig:robust_blog_acc} shows the classification
accuracies when $1\%, 2\%$, and $5\% $ of blogs are labeled, with
$16.66\%$ and $33.33\%$ of these labeled blogs mislabeled in each
labeling ratio. We see that, in each case, RGTVR provides the most
accurate classification.

\begin{figure}[t]
  \begin{center}
    \begin{tabular}{cc}  
      \includegraphics[width=0.48\columnwidth]{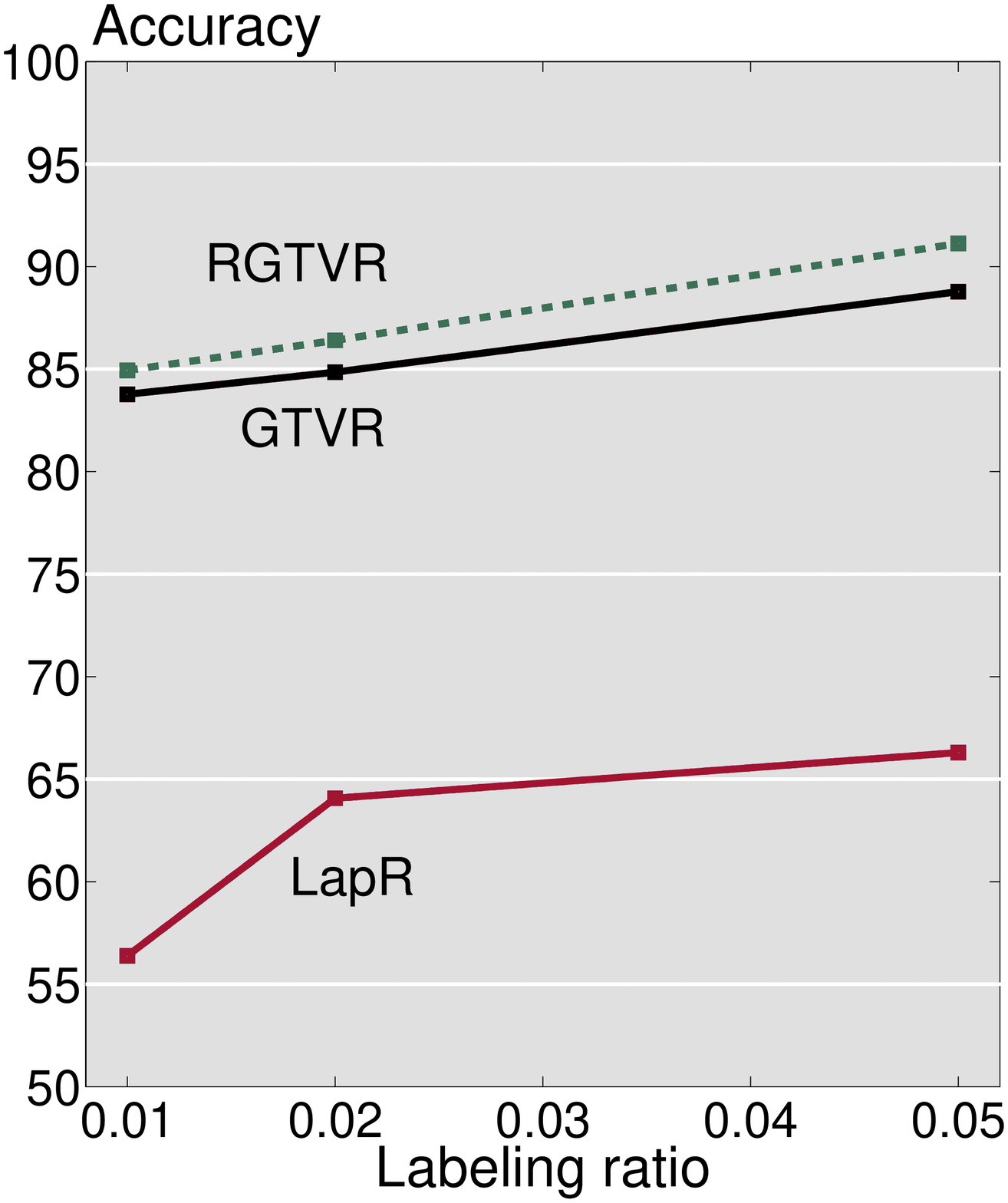} 
      & 
      \includegraphics[width=0.48\columnwidth]{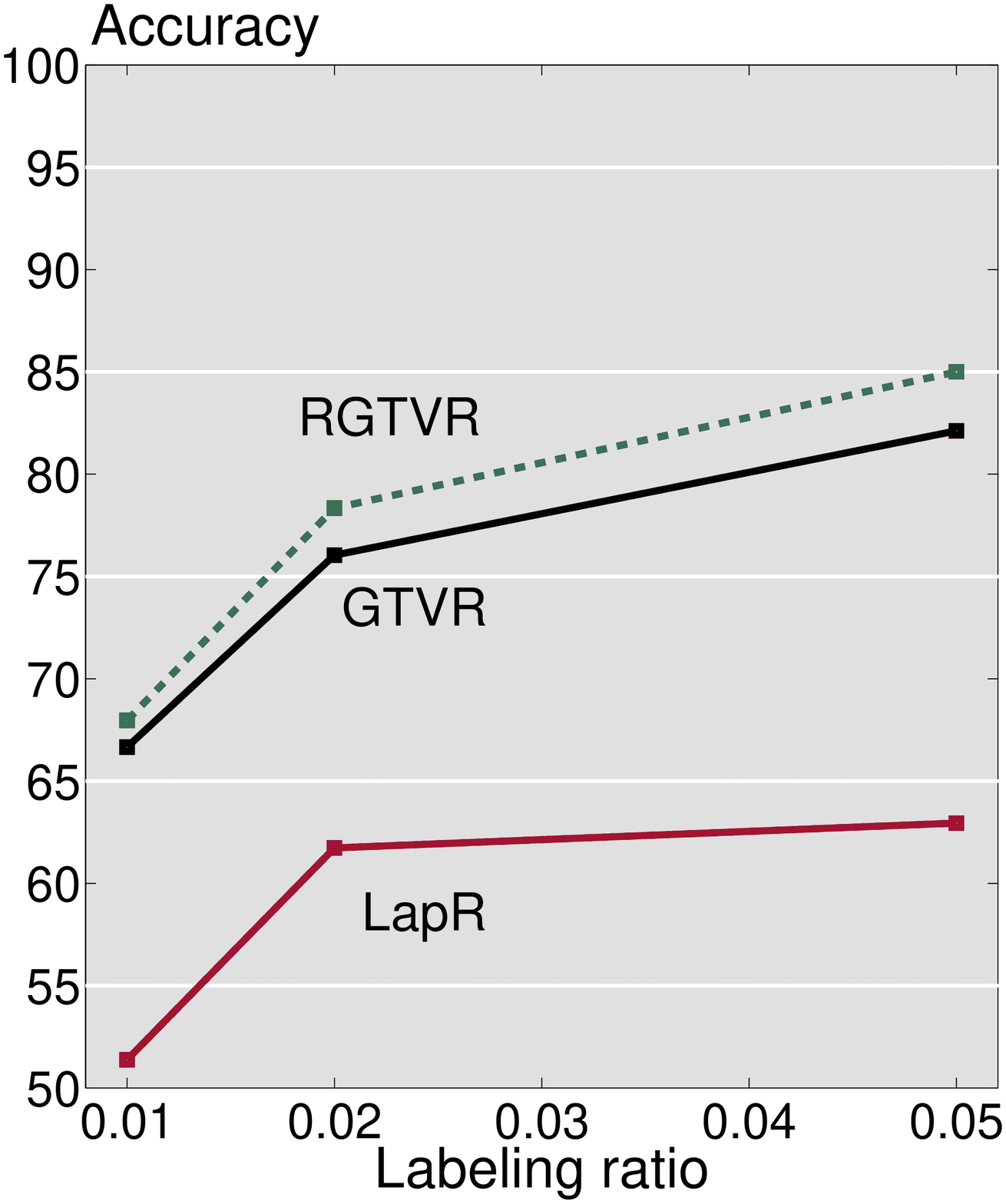}\\
      {\small (a) mislabeled ratio: 16.67\%.} & {\small (b) mislabeled ratio: 33.33\%.} 
    \end{tabular}
  \end{center}
  \vspace{-0.1in}
  \caption{\label{fig:robust_blog_acc} Robustness to mislabeled blogs: accuracy comparison with labeling ratio of $1\%, 2\%$ and $5\% $, and mislabeling ratio of $16.66\%$ and $33.33\%$ in each labeling ratio.}
\end{figure}

\subsubsection{Bridge condition identification}
We consider a semi-supervised regression problem and adopt the dataset
of acceleration signals as described in
Section~\ref{sec:accelerationsignals}. To validate the robustness of
facing outliers, we randomly mislabel a fraction of labeled
acceleration signals, feed them into the graph signal inpainting
algorithm together with correctly labeled acceleration signals, and
compare the fault tolerances of the
algorithms. Figure~\ref{fig:robust_bridge_mse} shows MSEs when $1\%,
2\%$, and $5\% $ of signals are labeled, with $16.66\%$ and $33.33\%$
of these labeled signals mislabeled in each labeling ratio. We see
that, in each case, RGTVR provides the smallest error.

\begin{figure}[t]
  \begin{center}
    \begin{tabular}{cc}  
      \includegraphics[width=0.48\columnwidth]{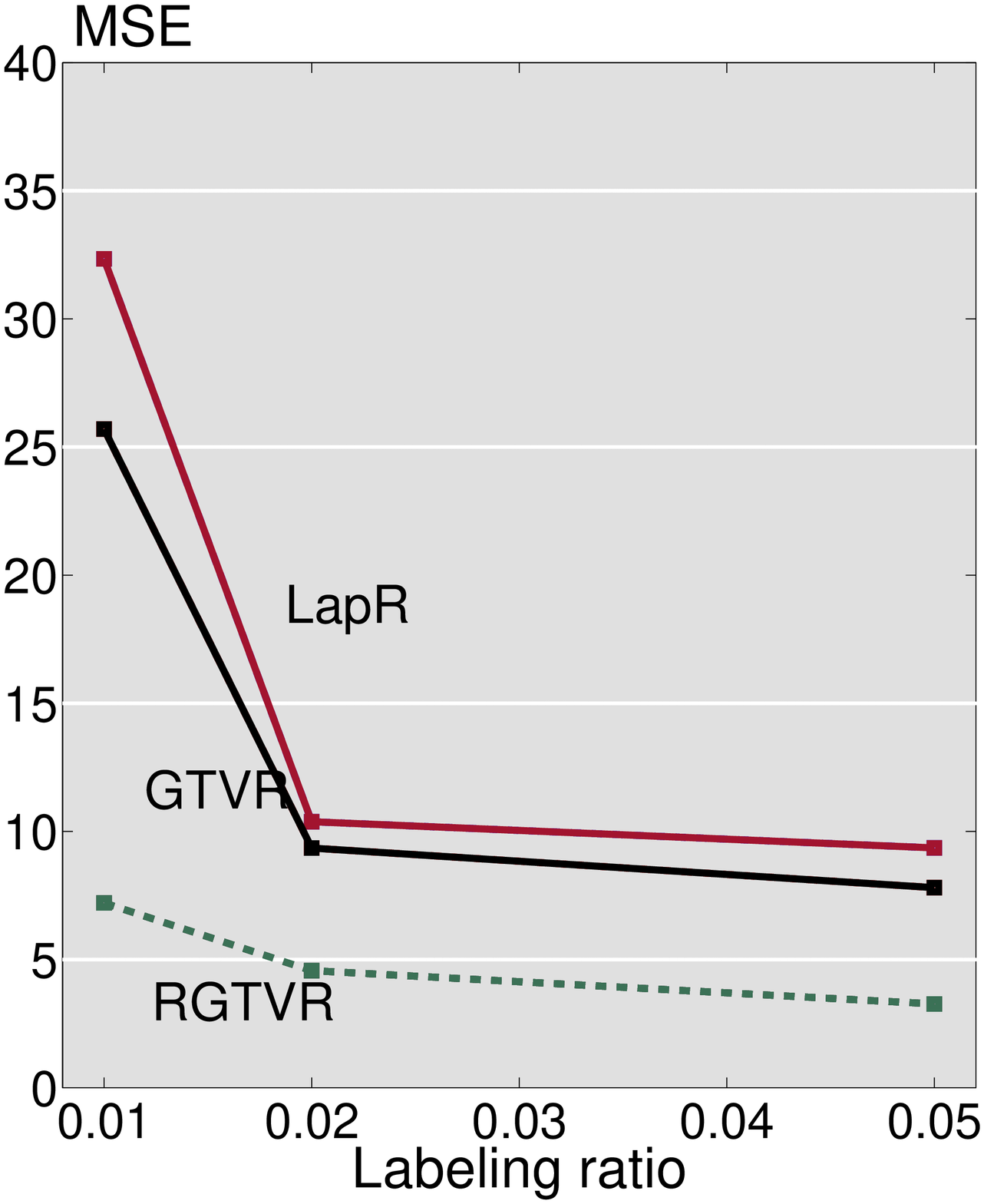} 
      & 
      \includegraphics[width=0.48\columnwidth]{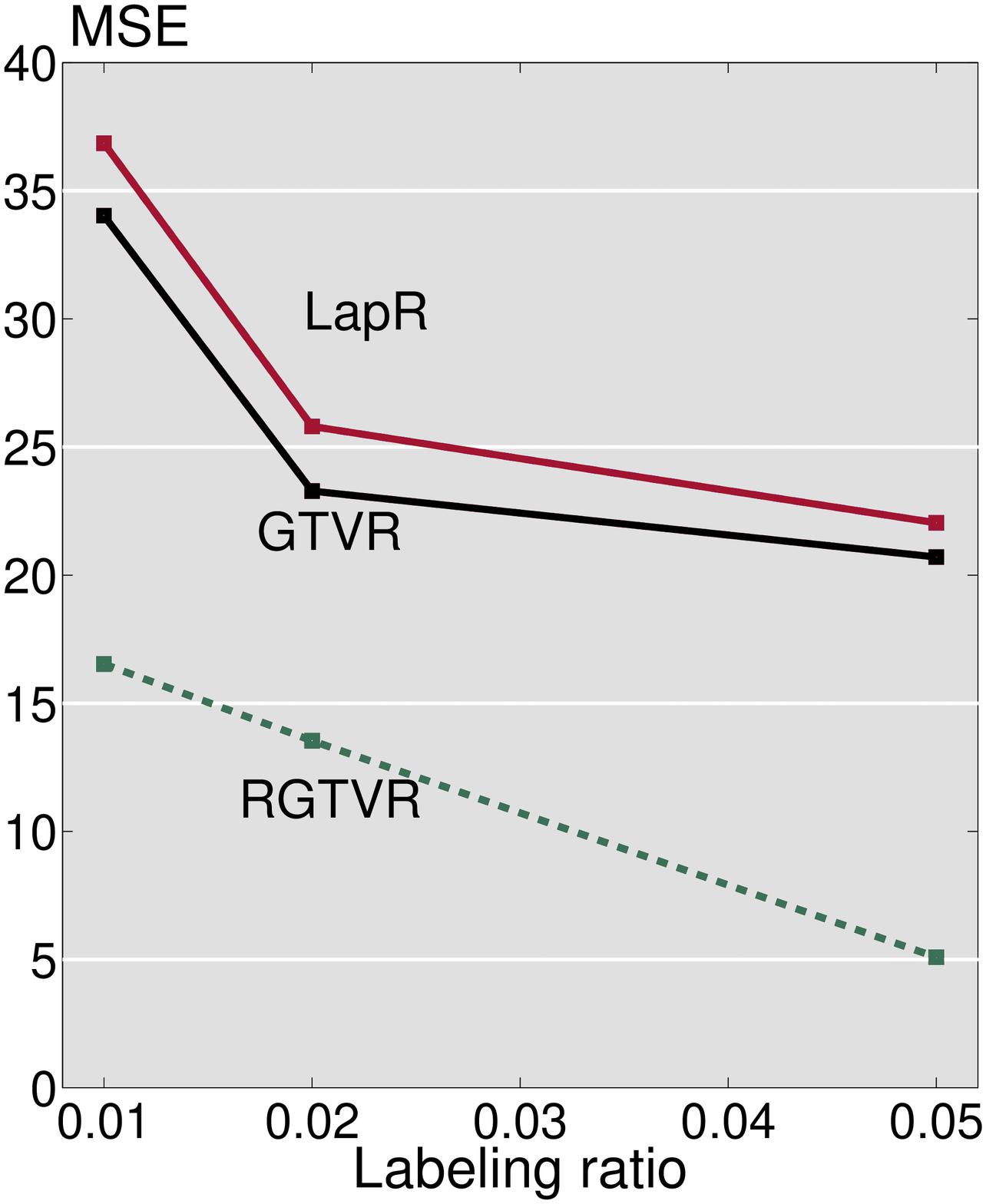}\\
      {\small (a) 16.67\%.} & {\small (b) 33.33\%.} 
    \end{tabular}
  \end{center}
  \vspace{-0.1in}
  \caption{\label{fig:robust_bridge_mse} Robustness to mislabeled signals: MSE
 comparison with labeling ratio of $1\%, 2\%$, and $5\% $, and mislabeling ratio of $16.66\%$ and $33.33\%$ in each labeling ratio.}
\end{figure}

\section{Conclusions}
\label{sec:conclusions}
We formulated graph signal recovery as an optimization problem and
provided a general solution by using the alternating direction method
of multipliers. We showed that several existing recovery
problems, including signal inpainting, matrix completion, and robust
principal component analysis, are related to the proposed graph signal
recovery problem. We further considered three subproblems, including
graph signal inpainting, graph signal matrix completion, and anomaly
detection of graph signals. For each subproblem, we provided specific
solutions and theoretical analysis. Finally, we validated the proposed
methods  on real-world recovery problems,
including online blog classification, bridge condition identification,
temperature estimation, recommender system, and expert opinion
combination of online blog classification.

\section{Acknowledgment}
We gratefully acknowledge support from the NSF through awards 1130616, 1421919, 1011903, 1018509, AFOSR grant FA95501210087, and the University Transportation Center grant-DTRT12-GUTC11
from the US Department of Transportation as well as the CMU Carnegie Institute of Technology Infrastructure Award. We also thank the editor and the reviewers for comments that led to improvements in the manuscript. We follow the principles of reproducible research. To that end, we created a reproducible research page available to
readers~\cite{ChenSMK:14:web}. Initial parts of this work were presented at ICASSP 2014~\cite{ChenSLWMRBGK:14}.

\bibliographystyle{IEEEtran}
\bibliography{bibl_jelena}

\begin{thebibliography}{10}
\providecommand{\url}[1]{#1}
\csname url@samestyle\endcsname
\providecommand{\newblock}{\relax}
\providecommand{\bibinfo}[2]{#2}
\providecommand{\BIBentrySTDinterwordspacing}{\spaceskip=0pt\relax}
\providecommand{\BIBentryALTinterwordstretchfactor}{4}
\providecommand{\BIBentryALTinterwordspacing}{\spaceskip=\fontdimen2\font plus
\BIBentryALTinterwordstretchfactor\fontdimen3\font minus
  \fontdimen4\font\relax}
\providecommand{\BIBforeignlanguage}[2]{{%
\expandafter\ifx\csname l@#1\endcsname\relax
\typeout{** WARNING: IEEEtran.bst: No hyphenation pattern has been}%
\typeout{** loaded for the language `#1'. Using the pattern for}%
\typeout{** the default language instead.}%
\else
\language=\csname l@#1\endcsname
\fi
#2}}
\providecommand{\BIBdecl}{\relax}
\BIBdecl

\bibitem{Jackson:08}
M.~Jackson, \emph{Social and Economic Networks}.\hskip 1em plus 0.5em minus
  0.4em\relax Princeton University Press, 2008.

\bibitem{Newman:10}
M.~Newman, \emph{Networks: An Introduction}.\hskip 1em plus 0.5em minus
  0.4em\relax Oxford University Press, 2010.

\bibitem{SandryhailaM:13}
A.~Sandryhaila and J.~M.~F. Moura, ``Discrete signal processing on graphs,''
  \emph{IEEE Trans. Signal Process.}, vol.~61, no.~7, pp. 1644--1656, Apr.
  2013.

\bibitem{SandryhailaM:131}
------, ``Discrete signal processing on graphs: {F}requency analysis,''
  \emph{IEEE Trans. Signal Process.}, vol.~62, no.~12, pp. 3042--3054, Jun.
  2014.

\bibitem{ShumanNFOV:13}
D.~I. Shuman, S.~K. Narang, P.~Frossard, A.~Ortega, and P.~Vandergheynst, ``The
  emerging field of signal processing on graphs: {E}xtending high-dimensional
  data analysis to networks and other irregular domains,'' \emph{IEEE Signal
  Process. Mag.}, vol.~30, pp. 83--98, May 2013.

\bibitem{HammondVG:11}
D.~K. Hammond, P.~Vandergheynst, and R.~Gribonval, ``Wavelets on graphs via
  spectral graph theory,'' \emph{Appl. Comput. Harmon. Anal.}, vol.~30, pp.
  129--150, Mar. 2011.

\bibitem{NarangO:2012}
S.~K. Narang and A.~Ortega, ``Perfect reconstruction two-channel wavelet filter
  banks for graph structured data,'' \emph{IEEE Trans. Signal Process.},
  vol.~60, no.~6, pp. 2786--2799, Jun. 2012.

\bibitem{NarangO:2013}
------, ``Compact support biorthogonal wavelet filterbanks for arbitrary
  undirected graphs,'' \emph{IEEE Trans. Signal Process.}, vol.~61, no.~19, pp.
  4673--4685, Oct. 2013.

\bibitem{NarangSO:10}
S.~K. Narang, G.~Shen, and A.~Ortega, ``Unidirectional graph-based wavelet
  transforms for efficient data gathering in sensor networks,'' in \emph{Proc.
  IEEE Int. Conf. Acoust., Speech Signal Process.}, Dallas, TX, Mar. 2010, pp.
  2902--2905.

\bibitem{Pesenson:08}
I.~Z. Pesenson, ``Sampling in {P}aley-{W}iener spaces on combinatorial
  graphs,'' \emph{Trans. Amer. Math. Soc.}, vol. 360, no.~10, pp. 5603--5627,
  May 2008.

\bibitem{NarangGO:13}
S.~K. Narang, A.~Gadde, and A.~Ortega, ``Signal processing techniques for
  interpolation in graph structured data,'' in \emph{Proc. IEEE Int. Conf.
  Acoust., Speech Signal Process.}, Vancouver, May 2013, pp. 5445--5449.

\bibitem{WangLG:14}
X.~Wang, P.~Liu, and Y.~Gu, ``Local-set-based graph signal reconstruction,''
  \emph{IEEE Trans. Signal Process.}, 2014, submitted.

\bibitem{AgaskarL2013}
A.~Agaskar and Y.~M. Lu, ``A spectral graph uncertainty principle,'' \emph{IEEE
  Trans. Inf. Theory}, vol.~59, no.~7, pp. 4338 -- 4356, Jul. 2013.

\bibitem{ChenCRBGK:13}
S.~Chen, F.~Cerda, P.~Rizzo, J.~Bielak, J.~H. Garrett, and J.~Kova{\v
  c}evi{\'c}, ``Semi-supervised multiresolution classification using adaptive
  graph filtering with application to indirect bridge structural health
  monitoring,'' \emph{IEEE Trans. Signal Process.}, vol.~62, no.~11, pp. 2879
  -- 2893, Jun. 2014.

\bibitem{SandryhailaM:13g}
A.~Sandryhaila and J.~M.~F. Moura, ``Classification via regularization on
  graphs,'' in \emph{IEEE GlobalSIP}, Austin, TX, Dec. 2013, pp. 495--498.

\bibitem{EkambaramFAB:13}
V.~N. Ekambaram, B.~A. G.~Fanti, and K.~Ramchandran, ``Wavelet-regularized
  graph semi-supervised learning,'' in \emph{IEEE GlobalSIP}, Austin, TX, Dec.
  2013, pp. 423 -- 426.

\bibitem{ZhangDF:12}
X.~Zhang, X.~Dong, and P.~Frossard, ``Learning of structured graph
  dictionaries,'' in \emph{Proc. IEEE Int. Conf. Acoust., Speech Signal
  Process.}, Kyoto, Japan, 2012, pp. 3373--3376.

\bibitem{ThanouSF:13}
D.~Thanou, D.~I. Shuman, and P.~Frossard, ``Parametric dictionary learning for
  graph signals,'' in \emph{IEEE GlobalSIP}, Austin, TX, Dec. 2013, pp.
  487--490.

\bibitem{ChenO:14}
P.-Y. Chen and A.~Hero, ``Local {F}iedler vector centrality for detection of
  deep and overlapping communities in networks,'' in \emph{Proc. IEEE Int.
  Conf. Acoust., Speech Signal Process.}, Florence, Italy, 2014, pp. 1120 --
  1124.

\bibitem{SandryhailaM:14}
A.~Sandryhaila and J.~M.~F. Moura, ``Big data processing with signal processing
  on graphs,'' \emph{IEEE Signal Process. Mag.}, vol.~31, no.~5, pp. 80 -- 90,
  2014.

\bibitem{ElmoatazLB:08}
A.~Elmoataz, O.~Lezoray, and S.~Bougleux, ``Nonlocal discrete regularization on
  weighted graphs: A framework for image and manifold processing,'' \emph{IEEE
  Trans. Image Process.}, vol.~17, no.~7, pp. 1047--1060, Jul. 2008.

\bibitem{ChenSMK:13}
S.~Chen, A.~Sandryhaila, J.~M.~F. Moura, and J.~Kova{\v c}evi{\'c}, ``Adaptive
  graph filtering: {M}ultiresolution classification on graphs,'' in \emph{IEEE
  GlobalSIP}, Austin, TX, Dec. 2013, pp. 427 -- 430.

\bibitem{DongFVN:14}
X.~Dong, P.~Frossard, P.~Vandergheynst, and N.~Nefedov, ``Clustering on
  multi-layer graphs via subspace analysis on {G}rassmann manifolds,''
  \emph{IEEE Trans. Signal Process.}, vol.~62, no.~4, pp. 905--918, Feb. 2014.

\bibitem{Chung:96}
F.~R.~K. Chung, \emph{Spectral Graph Theory (CBMS Regional Conference Series in
  Mathematics, No. 92)}.\hskip 1em plus 0.5em minus 0.4em\relax Am. Math. Soc.,
  1996.

\bibitem{BelkinN:03}
M.~Belkin and P.~Niyogi, ``Laplacian eigenmaps for dimensionality reduction and
  data representation,'' \emph{Neur. Comput.}, vol.~13, pp. 1373--1396, 2003.

\bibitem{PueschelM:08}
M.~P{\"u}schel and J.~M.~F. Moura, ``Algebraic signal processing theory:
  Foundation and {1-D} time,'' \emph{IEEE Trans. Signal Process.}, vol.~56,
  no.~8, pp. 3572--3585, Aug. 2008.

\bibitem{Pueschelm:08b}
------, ``Algebraic signal processing theory: {1-D} space,'' \emph{IEEE Trans.
  Signal Process.}, vol.~56, no.~8, pp. 3586--3599, Aug. 2008.

\bibitem{Mallat:09}
S.~Mallat, \emph{A Wavelet Tour of Signal Processing}, 3rd~ed.\hskip 1em plus
  0.5em minus 0.4em\relax New York, NY: Academic Press, 2009.

\bibitem{BuadesCM:05}
A.~Buades, B.~Coll, and J.~M. Morel, ``A review of image denoising algorithms,
  with a new one,'' \emph{Multiscale Modeling \& Simulation}, vol.~4, pp.
  490--530, Jul. 2005.

\bibitem{Rudin:92}
L.~I. Rudin, Osher, and E.~Fatemi, ``Nonlinear total variation based noise
  removal algorithms,'' \emph{Physica D}, vol.~60, no. 1--4, pp. 259--268, Nov.
  1992.

\bibitem{Chan:01}
T.~F. Chan, S.~Osher, and J.~Shen, ``The digital {TV} filter and nonlinear
  denoising,'' \emph{IEEE Trans. Image Process.}, vol.~10, no.~2, pp. 231--241,
  Feb. 2001.

\bibitem{ChambolleA:04}
A.~Chambolle, ``An algorithm for total variation minimization and
  applications,'' \emph{J. Math. Imag. Vis.}, vol.~20, no. 1-2, pp. 89--97,
  Jan. 2004.

\bibitem{Donoho:06}
D.~L. Donoho, ``Compressed sensing,'' \emph{IEEE Trans. Inf. Theory}, vol.~52,
  no.~4, pp. 1289--1306, Apr. 2006.

\bibitem{CandesRT:06a}
E.~J. Cand{\'e}s, J.~K. Romberg, and T.~Tao, ``Stable signal recovery from
  incomplete and inaccurate measurements,'' \emph{Comm. Pure Appl. Math.},
  vol.~59, pp. 1207--1223, Aug. 2006.

\bibitem{CandesR:09}
E.~J. Cand{\'e}s and B.~Recht, ``Exact matrix completion via convex
  optimization,'' \emph{Journal Foundations of Computational Mathematics.},
  vol.~9, no.~2, pp. 717--772, Dec. 2009.

\bibitem{CandesP:10}
E.~J. Cand{\'e}s and Y.~Plan, ``Matrix completion with noise,''
  \emph{Proceedings of the IEEE}, vol.~98, pp. 925--936, Jun. 2010.

\bibitem{CandesLMW:11}
E.~J. Cand{\'e}s, X.~Li, Y.~Ma, and J.~Wright, ``Robust principal component
  analysis?'' \emph{Journal of the ACM}, vol.~58, May 2011.

\bibitem{WrightGRM:09}
J.~Wright, A.~Ganesh., S.~Rao, Y.~Peng, and Y.~Ma, ``Robust principal component
  analysis: Exact recovery of corrupted low-rank matrices by convex
  optimization,'' in \emph{Proc. Neural Information Process. Syst.}, Dec. 2009,
  pp. 2080--2088.

\bibitem{ChanS:05}
T.~F. Chan and J.~Shen, ``Variational image inpainting,'' \emph{Comm. Pure
  Applied Math}, vol.~58, pp. 579--619, Feb. 2005.

\bibitem{MairalES:08}
J.~Mairal, M.~Elad, and G.~Sapiro, ``Sparse representation for color image
  restoration,'' \emph{IEEE Trans. Image Process.}, vol.~17, pp. 53--69, Jan.
  2008.

\bibitem{KeshavanMO:10}
R.~H. Keshavan, A.~Montanari, and S.~Oh, ``Matrix completion from noisy
  entries,'' \emph{J. Machine Learn. Research}, vol.~11, pp. 2057--2078, Jul.
  2010.

\bibitem{MardaniMG:13}
M.~Mardani, G.~Mateos, and G.~B. Giannakis, ``Decentralized
  sparsity-regularized rank minimization: Algorithms and applications,''
  \emph{IEEE Trans. Signal Process.}, vol.~61, pp. 5374 -- 5388, Nov. 2013.

\bibitem{AnisGO:14}
A.~Anis, A.~Gadde, and A.~Ortega, ``Towards a sampling theorem for signals on
  arbitrary graphs,'' in \emph{Proc. IEEE Int. Conf. Acoust., Speech Signal
  Process.}, May 2014, pp. 3864--3868.

\bibitem{ZhuGL:03}
X.~Zhu, Z.~Ghahramani, and J.~Lafferty, ``Semi-supervised learning using
  {G}aussian fields and harmonic functions,'' in \emph{Proc. ICML}, 2003, pp.
  912--919.

\bibitem{ZhouS:04}
D.~Zhou and B.~Scholkopf, ``A regularization framework for learning from graph
  data,'' in \emph{ICML Workshop Stat. Rel. Learn.}, 2004, pp. 132--137.

\bibitem{BelkinNS:06}
M.~Belkin, P.~Niyogi, and P.~Sindhwani, ``Manifold regularization: A geometric
  framework for learning from labeled and unlabeled examples.'' \emph{J.
  Machine Learn. Research}, vol.~7, pp. 2399--2434, 2006.

\bibitem{ChenSLWMRBGK:14}
S.~Chen, A.~Sandryhaila, G.~Lederman, Z.~Wang, J.~M.~F. Moura, P.~Rizzo,
  J.~Bielak, J.~H. Garrett, and J.~Kova{\v c}evi{\'c}, ``Signal inpainting on
  graphs via total variation minimization,'' in \emph{Proc. IEEE Int. Conf.
  Acoust., Speech Signal Process.}, Florence, Italy, May 2014, pp. 8267 --
  8271.

\bibitem{ChenSMK:14a}
S.~Chen, A.~Sandryhaila, J.~M.~F. Moura, and J.~Kova{\v c}evi{\'c}, ``Signal
  denoising on graphs via graph filtering,'' in \emph{Proc. IEEE Glob. Conf.
  Signal Information Process.}, Atlanta, GA, Dec. 2014.

\bibitem{VetterliKG:12}
\BIBentryALTinterwordspacing
M.~Vetterli, J.~Kova{\v c}evi{\'c}, and V.~K. Goyal, \emph{Foundations of
  Signal Processing}.\hskip 1em plus 0.5em minus 0.4em\relax Cambridge
  University Press, 2014, http://www.fourierandwavelets.org/. [Online].
  Available: \url{http://www.fourierandwavelets.org/}
\BIBentrySTDinterwordspacing

\bibitem{DonohoE:03}
D.~Donoho and M.~Elad, ``Optimally sparse representation in general
  (nonorthogonal) dictionaries via $\ell^1$ minimization,'' \emph{Proc. Nat.
  Acad. Sci.}, vol. 100, no.~5, pp. 2197--2202, Mar. 2003.

\bibitem{CandesP:09}
E.~J. Cand{\`e}s and Y.~Plan, ``Near-ideal model selection by $\ell^1$
  minimization,'' \emph{Ann. Statist.}, vol.~37, no.~5A, pp. 2145--2177, 2009.

\bibitem{BoydPCPE:11}
S.~Boyd, N.~Parikh, E.~Chu, B.~Peleato, and J.~Eckstein, ``Distributed
  optimization and statistical learning via the alternating direction method of
  multipliers,'' \emph{Found. Trends Mach. Learn.}, vol.~3, no.~1, pp. 1--122,
  Jan. 2011.

\bibitem{BoydV:04}
S.~Boyd and L.~Vandenberghe, \emph{Convex Optimization}.\hskip 1em plus 0.5em
  minus 0.4em\relax New York, NY, USA: Cambridge University Press, 2004,
  vol.~17, no.~7.

\bibitem{Zhu:05}
X.~Zhu, ``Semi-supervised learning literature survey,'' Univ.
  Wisconsin-Madison, Tech. Rep. 1530, 2005.

\bibitem{AdamicG:05}
L.~A. Adamic and N.~Glance, ``The political blogosphere and the 2004 {U}.{S}.
  election: Divided they blog,'' in \emph{Proc. LinkKDD}, 2005, pp. 36--43.

\bibitem{CerdaGBRBZCM:12}
F.~Cerda, J.~Garrett, J.~Bielak, P.~Rizzo, J.~A. Barrera, Z.~Zhang, S.~Chen,
  M.~T. McCann, and J.~Kova{\v c}evi{\'c}, ``Indirect structural health
  monitoring in bridges: scale experiments,'' in \emph{Proc. Int. Conf. Bridge
  Maint., Safety Manag.}, Lago di Como, Jul. 2012, pp. 346--353.

\bibitem{CerdaCBGRK:12}
F.~Cerda, S.~Chen, J.~Bielak, J.~H. Garrett, P.~Rizzo, and J.~Kova{\v
  c}evi{\'c}, ``Indirect structural health monitoring of a simplified
  laboratory-scale bridge model,'' \emph{Int. J. Smart Struct. Syst., Sp. Iss.
  Challenge on bridge health monitoring utilizing vehicle-induced vibrations},
  vol.~13, no.~5, May 2013.

\bibitem{LedermanWBNGCKCR:14}
G.~Lederman, Z.~Wang, J.~Bielak, H.~Noh, J.~H. Garrett, S.~Chen, J.~Kova{\v
  c}evi{\'c}, F.~Cerda, and P.~Rizzo, ``Damage quantification and localization
  algorithms for indirect {SHM} of bridges,'' in \emph{Proc. Int. Conf. Bridge
  Maint., Safety Manag.}, Shanghai, Jul. 2014, pp. 640 -- 647.

\bibitem{Jester}
\BIBentryALTinterwordspacing
J.~jokes. Http://eigentaste.berkeley.edu/user/index.php. [Online]. Available:
  \url{http://eigentaste.berkeley.edu/user/index.php}
\BIBentrySTDinterwordspacing

\bibitem{MazumderHT:10}
R.~Mazumder, T.~Hastie, and R.~Tibshirani, ``Spectral regularization algorithms
  for learning large incomplete matrices,'' \emph{J. Mach. Learn. Res.},
  vol.~11, pp. 2287--2322, Aug. 2010.

\bibitem{CaiCS:10}
J.-F. Cai, E.~J. Cand\`{e}s, and Z.~Shen, ``A singular value thresholding
  algorithm for matrix completion,'' \emph{SIAM J. on Optimization}, vol.~20,
  no.~4, pp. 1956--1982, Mar. 2010.

\bibitem{ZhangWM:06}
S.~Zhang, W.~Wang, J.~Ford, and F.~Makedon, ``Learning from incomplete ratings
  using non-negative matrix factorization,'' in \emph{In Proc. of SIAM
  Conference on Data Mining (SDM)}, 2006, pp. 549--553.

\bibitem{GuZD:10}
Q.~Gu, J.~Zhou, and C.~Ding, ``Collaborative filtering: Weighted nonnegative
  matrix factorization incorporating user and item graphs,'' in \emph{In Proc.
  of SIAM Conference on Data Mining (SDM)}, 2010, pp. 199--210.

\bibitem{KeshavanMO:09}
R.~H. Keshavan, A.~Montanari, and S.~Oh, ``Low-rank matrix completion with
  noisy observations: A quantitative comparison,'' in \emph{Proceedings of the
  47th Annual Allerton Conference on Communication, Control, and Computing},
  ser. Allerton'09, 2009, pp. 1216--1222.

\bibitem{KuruvillaSHK:13}
A.~Kuruvilla, N.~Shaikh, A.~Hoberman, and J.~Kova{\v c}evi{\'c}, ``Automated
  diagnosis of otitis media: {A} vocabulary and grammar,'' \emph{Int. J.
  Biomed. Imag., Sp. Iss. Computer Vis. Image Process. for Computer-Aided
  Diagnosis}, Aug. 2013.

\bibitem{CholletiGBPDSP:09}
S.~R. Cholleti, S.~A. Goldman, A.~Blum, D.~G. Politte, S.~Don, K.~Smith, and
  F.~Prior, ``Veritas: Combining expert opinions without labeled data.''
  \emph{International Journal on Artificial Intelligence Tools}, vol.~18,
  no.~5, pp. 633--651, 2009.

\bibitem{ChenSMK:14:web}
S.~Chen, A.~Sandryhaila, J.~M.~F. Moura, and J.~Kova{\v c}evi{\'c}. (2015)
  Signal recovery on graphs: Variation minimization.
  Http://jelena.ece.cmu.edu/\-repository/\-rr/\-15\_ChenSMK/\-15\_ChenSMK.html.

\end{thebibliography}

\section{Appendix}

\label{sec:appendix}
To decompose the
graph total variation term and nuclear norm term, we introduce an auxiliary
matrix $\Z$ to duplicate $\X$ and a residual matrix
$\Co$ to capture the error introduced by the inaccessible part of
$\T$ in~\eqref{eq:Measurement}, and then rewrite~\eqref{eq:C_Opt} in the equivalent form
\begin{eqnarray*}
\nonumber
\Xw, \Ww, \Ew, \Zw, \Cw &=& \arg \min_{\X, \W, \E, \Z, \Co}\ \left\|\W \right\|_F^2+\alpha \STV_2(\Z)\\
\label{eq:Opt_AL}
&&+ \beta \left\|\X\right\|_* + \gamma \left\|\E\right\|_1 + \I(\Co_\M),\\
\nonumber
\text{subject to}
&& \T = \X + \W + \E + \Co
\\
\label{eq:Opt_AL_cond}
&& \Z = \X,
\end{eqnarray*}
where $\I$ is an indicator operator defined as
\begin{equation*}
\label{eq:indicator}
\I(\X)_{n,m} = \begin{cases}
0,& \text{if \,\,\,} \X_{n,m} = 0, \\
+\infty,& \text{otherwise}.
\end{cases}
\end{equation*}
Note that we move $\left\|\W\right\|_F^2$ from the constraint to the
objective function and putting $\I(\Co_\M)$ in the objective function is equivalent to setting $\Co_\M$ to be zero. We then construct the
augmented Lagrangian function
\begin{eqnarray*}
&& \L{(\X,\W,\E,\Z,\Co,\Y_1,\Y_2)} \\
\nonumber
&=& \left\|\W\right\|_F^2 + \alpha \STV_2{(\Z)} +  \beta \left\|\X\right\|_*+ \gamma \left\|\E\right\|_1 + \I{(\Co_\M)} \\
\nonumber
&& + < \Y_1, \T-\X-\W-\E-\Co > + <\Y_2, \X - \Z > \\
\nonumber
&& + \frac{\eta}{2} \left\| \T-\X-\W-\E-\Co\right\|_F^2 +\frac{\eta}{2} \left\|\X - \Z \right\|_F^2.
\end{eqnarray*}
We minimize for each variable individually. For $\X$, we aim to solve
\begin{eqnarray*}
\min_{\X}   \beta \left\|\X\right\|_* + \frac{\eta}{2} \left\| \T-\X-\W-\E-\Co - \Y_1 \right\|_F^2 +\frac{\eta}{2} \left\|\X - \Z - \Y_2\right\|_F^2.
\end{eqnarray*}
We solve it by the standard matrix-soft thresholding~\eqref{eq:shrinkage_singular}; for $\W$,  we aim to solve
\begin{eqnarray*}
\min_{\W}   \left\|\W\right\|_F^2 + \frac{\eta}{2} \left\| \T-\X-\W-\E-\Co - \Y_1 \right\|_F^2.
\end{eqnarray*}
We obtain the closed-form solution by setting the derivative to zero; for $\E$,  we aim to solve
\begin{eqnarray*}
\min_{\E}  \gamma  \left\|\E\right\|_1 + \frac{\eta}{2} \left\| \T-\X-\W-\E-\Co - \Y_1 \right\|_F^2.
\end{eqnarray*}
We solve it by standard soft thresholding~\eqref{eq:shrinkage}; for $\Z$, we aim to solve
\begin{eqnarray*}
\min_{\X}   \alpha \left\|\Z - \Adj \Z \right\|_F^2 +\frac{\eta}{2} \left\|\X - \Z - \Y_2\right\|_F^2.
\end{eqnarray*}
We obtain the closed-form solution by setting the derivative to zero. For $\Co$, we just set $\Co_M = 0$ to satisfy the constraints; for $\Y_1$ and $\Y_2$, we update them as the standard Lagrange multipliers. For the final implementation see Algorithm~\ref{alg:GSR}. 

\end{document}